\newcommand{\GF}[1]{{\mathbb F}_{#1}}
\renewenvironment{proof}{\begin{oldproof}}{\qed\end{oldproof}}
\begin{document}
\title{Solving Some Affine Equations over Finite Fields}
\author{ Sihem Mesnager\inst{1} \and  Kwang Ho Kim\inst{2,3} \and  Jong Hyok Choe\inst{2} \and Dok Nam Lee\inst{2} } \institute{ Department of Mathematics, University of Paris VIII, 93526 Saint-Denis, France, University of Paris XIII, CNRS, LAGA UMR 7539, Sorbonne Paris Cit\'e, 93430 Villetaneuse,
France, and Telecom ParisTech 75013 Paris. \email{Email:
smesnager@univ-paris8.fr} \and Institute of Mathematics, State
Academy of Sciences, Pyongyang, Democratic People's Republic of
Korea. \email{Email: khk.cryptech@gmail.com} \and PGItech Corp.,
Pyongyang, Democratic People's Republic of Korea.}

\authorrunning{S. Mesnager, K.H. Kim, J.H. Choe and D.N. Lee}

{\renewcommand{\thefootnote}{}\footnotetext{Version : \today}}
\maketitle

\begin{abstract}
Let $l$ and $k$ be two integers such that $l|k$. Define
$T_l^k(X):=X+X^{p^l}+\cdots+X^{p^{l(k/l-2)}}+X^{p^{l(k/l-1)}}$ and
$S_l^k(X):=X-X^{p^l}+\cdots+(-1)^{(k/l-1)}X^{p^{l(k/l-1)}}$, where
$p$ is any prime.

This paper gives explicit representations of all solutions in
$\GF{p^n}$ to the affine equations $T_l^{k}(X)=a$ and
$S_l^{k}(X)=a$, $a\in \GF{p^n}$. For the case $p=2$ that was solved
very recently in \cite{MKCL2019}, the result of this paper reveals
another solution.
\end{abstract}

\noindent\textbf{Keywords:} Affine equation $\cdot$ Finite field
$\cdot$ Zeros of a polynomial $\cdot$ Linearized
polynomial.\\

{\bf Mathematics Subject Classification.} 11D04, 12E05, 12E12.

\section{Introduction}
Let $\GF{p^n}$ be the finite field of $p^n$ elements where $p$ is a
prime and $n\geq 1$ is a positive integer. A polynomial
$L(X)\in\GF{p^n}[X]$ of shape
\begin{displaymath}
  L(X) = \sum_{i=0}^t a_i X^{p^i}, a_i\in\GF{p^n}
\end{displaymath}
is called a \emph{linearized polynomial} over $\GF{p^n}$. An affine
equation  over $\GF{p^n}$ is an equation of type
\begin{equation}\label{Affine}
L(X)=a,
\end{equation}
where $L$ is a linearized polynomial and $a\in \GF{p^n}$.

Affine equations arise in many several different problems and
contexts such as rank metric codes and linear sets. In particular,
those involving the trace functions are crucial in many contexts of
cryptography and error-correcting codes
\cite{BSS1999,CarletBook,CarletBook1}. Recent research on linearized polynomials and related topics can be found in \cite{CSAJBOK2020,CSAJBOK2019109,MKCL2019,Olga-Zullo2019,Zanella2019}. However, to explicit the
solutions of affine equations
 is often challenging as the ultimate goal of such
study.

In this paper, we study the following two affine equations
\begin{eqnarray}
  \label{equation:T}  &&T_l^k(X) := \sum_{i=0}^{\frac kl-1} X^{p^{li}}=a,\\
  \label{equation:S}  &&S_l^k(X) := \sum_{i=0}^{\frac kl-1} (-1)^i X^{p^{li}}=a,
\end{eqnarray}
 where $a\in \GF{p^n}$, $k$ and $l$ are positive integers such that $l\vert k$.

It is well-known that a linearized polynomial induce a linear
transformation of $\GF{p^n}$ over $\GF{p}$. In particular, if $x_1$
and $x_2$ are two solutions in $\GF{p^n}$ to
Equation~(\ref{Affine}), then their difference $x_1-x_2$ is a zero
of $L$ in $\GF{p^n}$, that is, their difference lies in the set
$\{x\in\GF{p^n}\mid L(x) = 0\}$, that we call the kernel of $L$
restricted to $\GF{p^n}$. Determination of the $\GF{p^n}-$ solutions
to Equation~(\ref{Affine}) can therefore be divided into two
problems: determining the kernel of $L$ restricted to $\GF{p^n}$ and
explicit a particular solution $x_0$ in $\GF{p^n}$. Indeed, if
those two problems are solved then the set of all
$\GF{p^n}-$solutions
 to Equation~(\ref{Affine}) is $x_0+\{x\in\GF{p^n}\mid
L(x) = 0\}$.

In this paper, we solve these two problems for the linearized
polynomials $T^k_l$ and $S^k_l$. We firstly determine the kernels of
$T^k_l$ and $S^k_l$ in Section~\ref{subsection:kernels}. Next, we
give explicit representations of particular solutions to
Equation~\eqref{equation:T} and Equation~\eqref{equation:S} in
Section~\ref{subsection:roots}. As by-product of those results, we
also characterize the elements $a$ in $\GF{p^n}$ for which
Equation~\eqref{equation:T} and Equation~\eqref{equation:S} has at
least one solution in $\GF{p^n}$ in Section~\ref{subsection:roots}.

\begin{remark} In \cite{MKCL2019}, we considered the particular
case of $p=2$ for which $T_l^k(X)=S_l^k(X).$ Interestingly,
Theorem~\ref{ker_T} and Theorem~\ref{T_kneq2} in this paper provides
another solution for this particular case.
\end{remark}

\section{Preliminaries}
\label{sec:solving}
Throughout this paper, we maintain the following notations (otherwise, we will point out it at the appropriate place).\\
\textbullet\quad $p$ is any prime and $n$ is a positive integer. \\
\textbullet\quad $a$ is any element of the finite field
$\GF{p^n}$.\\
\textbullet\quad $k$ and $l$ are positive integers such that $l\vert
k$. \\
\textbullet\quad We denote the greatest common divisor
 and the smallest common multiple of two positive integers $u$ and $v$ by $(u,v)$ and
 $[u,v]$,
respectively. \\
\textbullet\quad $d:=(n,k)$, $e:=(n,l)$ and $L:=[d,l]$. \\

We now present a lemma that will be frequently used throughout this
paper.

\begin{lemma}\label{lem_properties}
  For any positive integers $k$, $l$ and $m$ with $m\vert l\vert k$.
\begin{enumerate}
\item \(T_l^k\circ T_m^l(X)=T_m^k(X)\) is an identity. \(T_l^k\circ S_m^l(X)=S_m^k(X)\) if $l/m$ is even and
  \( S_l^k\circ S_m^l(X)=S_m^k(X)\) if $l/m$ is odd.
\item\label{lem_properties:2} \(S_l^k\circ T_l^{2l}(X)=S_k^{2k}(X)=X-X^{p^k}\) if
  $\frac{k}{l}$ is even and   \(S_l^k\circ T_l^{2l}(X)=T_k^{2k}(X)=X+X^{p^k}\) if $\frac{k}{l}$ is odd.
\item \label{lem_properties:3}  \(T_l^k\circ S_l^{2l}(X) =S_k^{2k}(X)\).
\item  \( T_k^{[n,k]}(x)=T_{d}^n(x)\) for
  any $x\in\GF{p^n}$. Furthermore, if
  $\frac{[n,k]}{k}$ is even, then \( S_k^{[n,k]}(x)=S_{d}^n(x)\) for
  any $x\in\GF{p^n}$.
\item   If $\frac{k}{l}$ is even, then \(S_l^k(x)+ S_l^k(x)^{p^l}=0\) for any \( x\in \GF{p^k} \).
\end{enumerate}
\end{lemma}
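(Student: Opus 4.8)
The plan is to compute $S_l^k(x)^{p^l}$ explicitly and add it to $S_l^k(x)$, exploiting that the map $y\mapsto y^{p^l}$ is a field automorphism of $\GF{p^n}$ (in particular additive). First I would write
\[
S_l^k(x)^{p^l}=\sum_{i=0}^{k/l-1}(-1)^i x^{p^{l(i+1)}}=\sum_{j=1}^{k/l}(-1)^{j-1}x^{p^{lj}},
\]
the second equality being the reindexing $j=i+1$. Adding this to $S_l^k(x)=\sum_{i=0}^{k/l-1}(-1)^i x^{p^{li}}$, the terms indexed by $1\le i\le k/l-1$ occur in one sum with coefficient $(-1)^i$ and in the other with coefficient $(-1)^{i-1}=-(-1)^i$, so they cancel in pairs. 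Only the two boundary terms survive: the $i=0$ term $x$ from $S_l^k(x)$ and the $j=k/l$ term $(-1)^{k/l-1}x^{p^k}=-(-1)^{k/l}x^{p^k}$ from $S_l^k(x)^{p^l}$.

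Carrying this out gives the exact identity
\[
S_l^k(x)+S_l^k(x)^{p^l}=x-(-1)^{k/l}x^{p^k},
\]
valid for every $x$ and every divisor pair $l\mid k$. Under the hypothesis that $k/l$ is even we have $(-1)^{k/l}=1$, so the right-hand side reduces to $x-x^{p^k}$. Finally, for $x\in\GF{p^k}$ the $p^k$-th power map fixes $x$, i.e. $x^{p^k}=x$, whence $S_l^k(x)+S_l^k(x)^{p^l}=x-x=0$, which is the assertion.

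A cleaner way to package the same computation is operator-theoretic: writing $\phi$ for the $p^l$-th power Frobenius acting on $\GF{p^n}$, one has $S_l^k=\sum_{i=0}^{k/l-1}(-\phi)^i$, and the telescoping relation $(1+\phi)\sum_{i=0}^{k/l-1}(-\phi)^i=1-(-\phi)^{k/l}$ reproduces $S_l^k(x)+S_l^k(x)^{p^l}=x-(-1)^{k/l}x^{p^k}$ at once, after which the conclusion follows as above. I do not expect a genuine obstacle here; the only point demanding care is the bookkeeping of signs and indices in the reindexing step, and verifying that the interior cancellation is exact so that precisely the two boundary terms remain.
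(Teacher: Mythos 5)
Your argument addresses only Item 5 of the lemma; Items 1 through 4 are never mentioned. That is the substantive gap. Items 1--3 are indeed routine compositions of sums of Frobenius powers, but Item 4 is not a one-line computation: one must argue that for $x\in\GF{p^n}$ the exponents $ik \bmod n$, $0\le i\le \frac{[n,k]}{k}-1$, run exactly over the multiples $jd$, $0\le j\le \frac nd -1$, of $d=(n,k)$ (which the paper does via $\frac{n}{d}=\frac{[n,k]}{k}$ and the observation that $n\mid ik$ iff $\frac nd\mid i$), and for the $S$-statement one must additionally check that when $\frac{[n,k]}{k}$ is even the indices $i$ and $j$ with $jd\equiv ik\pmod n$ have the same parity, which uses that $\frac kd$ is odd. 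A complete proof of the lemma must supply these arguments; as written, your proposal proves one of five assertions.

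For Item 5 itself, what you have is correct and is a genuinely different route from the paper's. The paper factors $S_l^k=S_l^{2l}\circ T_{2l}^k$ (using Item 1 and the hypothesis that $\frac kl$ is even, so $2l\mid k$), notes that $y:=T_{2l}^k(x)\in\GF{p^{2l}}$ for $x\in\GF{p^k}$, and reduces the claim to $(y-y^{p^l})+(y-y^{p^l})^{p^l}=y-y^{p^{2l}}=0$. You instead prove the stronger exact identity
\begin{equation*}
S_l^k(x)+S_l^k(x)^{p^l}=x-(-1)^{k/l}x^{p^k},
\end{equation*}
valid for all $x$ and all $l\mid k$, by telescoping, and then specialize. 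Your identity is the cleaner statement (it is the operator identity $(1+\phi)\sum_{i=0}^{k/l-1}(-\phi)^i=1-(-\phi)^{k/l}$ for the Frobenius $\phi$), it makes the role of the parity of $\frac kl$ and of the hypothesis $x\in\GF{p^k}$ completely transparent, and it avoids invoking Item 1. The paper's factorization, on the other hand, reuses machinery already established and is the pattern followed throughout the rest of the paper. Either is acceptable for Item 5; you still owe proofs of the other four items.
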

\begin{proof}
 The first three items are obtained by easy straightforward
  calculations. Hence, we give proofs only for the last two items.

  Since $nk=[n,k]d$, one has
  \(\frac{n}{d}=\frac{[n,k]}{k}\) and
  $\{jd \mid 0\leq j\leq \frac n{d}-1\} = \{ik\mod
  n \mid 0\leq i\leq\frac{[n,k]}{k}-1\}$ because $n$ divides $ik$ if
  and only if $i$ is a multiple of $\frac{n}{d}=\frac{[n,k]}{k}$. Therefore \( T_k^{[n,k]}(x)=T_{d}^n(x)\).

  Furthermore, if $\frac{[n,k]}{k}=\frac{n}{d}$ is even, then $\frac{k}{d}$ is odd since it is prime to $\frac{n}{d}$. Thus, when two integers $i$ and $j$ are such that $jd=ik \mod
  n$, they have the same parity. This proves $ S_k^{[n,k]}(x)=S_{d}^n(x)$ if
  $\frac{[n,k]}{k}$ is even.

  Finally, the last item is proved as follows: By Item 1,
  we have $S_l^k(X)=S_{l}^{2l}\circ T_{2l}^k(X)$. Let $x\in \GF{p^k}$. Then $y:=T_{2l}^k(x)\in
  \GF{p^{2l}}$. Hence $S_l^k(x)+ S_l^k(x)^{p^l}=S_{l}^{2l}(y)+{S_{l}^{2l}(y)}^{p^l}=(y-y^{p^l})+(y-y^{p^l})^{p^l}=0.$
\end{proof}

\section{On the Kernels of $T^k_l$ and $S^k_l$}
\label{subsection:kernels}

To determine the kernels of $T^k_l$ and $S^k_l$ in $\GF{p^n}$, we
begin with determining the zeros of $T^k_l$ and $S^k_l$ in the
algebraic closure $\overline{\GF{p}}$.

\begin{lemma}\label{T_l^k=0}
  It holds:
\begin{enumerate}
\item
\[
\{x\in
\overline{\GF{p}}\,|\,T_l^k(x)=0\}=S_l^{2l}(\GF{p^k})=\{x-x^{p^l}\,|\,x\in
\GF{p^k}\}.
\]
\item
When $\frac {k}{l}$ is even,
\[
\{x\in
\overline{\GF{p}}\,|\,S_l^k(x)=0\}=T_l^{2l}(\GF{p^k})=\{x+x^{p^l}\,|\,x\in
\GF{p^k}\}.
\]
\item
When $\frac {k}{l}$ is odd,
\[
\{x\in \overline{\GF{p}}\,|\,S_l^k(x)=0\}=S_k^{2k}\circ
T_l^{2l}(\GF{p^{2k}})=\{(x+x^{p^l})-(x+x^{p^l})^{p^k}\,|\,x\in
\GF{p^{2k}}\}.
\]

\end{enumerate}
\end{lemma}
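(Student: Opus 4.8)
The plan is to prove each of the three identities by a double inclusion, where the easy direction comes from the composition rules of Lemma~\ref{lem_properties} and the reverse direction from a cardinality count. Write $\phi$ for the $\GF{p}$-linear Frobenius operator $x\mapsto x^{p^l}$ and set $r:=k/l$, so that every polynomial appearing ($T_l^k$, $S_l^k$, $T_l^{2l}$, $S_l^{2l}$, $S_k^{2k}$) is a polynomial in $\phi$ with coefficients $\pm 1$; for instance $T_l^k=\sum_{i=0}^{r-1}\phi^i$, $S_l^{2l}=1-\phi$, $T_l^{2l}=1+\phi$ and $S_k^{2k}=1-\phi^{r}$. Since each of $T_l^k$ and $S_l^k$ has $X$ as its lowest-degree term, its formal derivative is $1$; hence both are separable of degree $p^{k-l}$ and have exactly $p^{k-l}$ distinct zeros in $\overline{\GF p}$. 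This is the cardinality that the three right-hand sets must match.

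First I would establish that each right-hand image is contained in the corresponding zero set. In item~1, writing $M:=S_l^{2l}$, Lemma~\ref{lem_properties}(3) gives $T_l^k\circ M=S_k^{2k}$, so for $y\in\GF{p^k}$ one has $T_l^k(M(y))=y-y^{p^k}=0$. In item~2, with $M:=T_l^{2l}$ and $r$ even, Lemma~\ref{lem_properties}(2) gives $S_l^k\circ M=S_k^{2k}$ and again $S_l^k(M(y))=y-y^{p^k}=0$ on $\GF{p^k}$. For item~3, with $M:=S_k^{2k}\circ T_l^{2l}$ and $r$ odd, I would use that $S_l^k$ commutes with $S_k^{2k}$ (both are polynomials in $\phi$) together with Lemma~\ref{lem_properties}(2) to obtain $S_l^k\circ M=S_k^{2k}\circ(S_l^k\circ T_l^{2l})=S_k^{2k}\circ T_k^{2k}$, which sends $y$ to $y-y^{p^{2k}}$ and therefore vanishes on $\GF{p^{2k}}$. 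Thus in all three cases $\range M$ lies in the relevant zero set.

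It remains to check that each image has exactly $p^{k-l}$ elements; this is the technical heart, and item~3 is the step I expect to be most delicate. By rank--nullity it suffices to compute $\ker M$ on the ambient field. Here I would use that, by the normal basis theorem, $\GF{p^{lN}}$ is a free module of rank one over $R:=\GF{p^l}[\phi]/(\phi^{N}-1)$, where $N=r$ for items~1--2 (ambient field $\GF{p^k}$) and $N=2r$ for item~3 (ambient field $\GF{p^{2k}}$); consequently, for any $f\in\GF{p^l}[\phi]$ the operator $f(\phi)$ has kernel of size $(p^l)^{\deg\gcd(f,\,\phi^{N}-1)}$. Applying this: in item~1, $\gcd(1-\phi,\phi^{r}-1)=\phi-1$ has degree $1$, so $|\ker M|=p^l$ and $|\range M|=p^{k}/p^{l}=p^{k-l}$; in item~2, since $r$ is even $\phi+1$ divides $\phi^{r}-1$, again giving degree $1$ and $|\range M|=p^{k-l}$; in item~3, $\gcd\bigl((1-\phi^{r})(1+\phi),\,\phi^{2r}-1\bigr)=(\phi^{r}-1)(\phi+1)$, using that $\phi+1$ divides $\phi^{r}+1$ because $r$ is odd, so $\deg\gcd=r+1$, whence $|\ker M|=p^{l(r+1)}=p^{k+l}$ and $|\range M|=p^{2k}/p^{k+l}=p^{k-l}$. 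In every case the image has the same cardinality $p^{k-l}$ as the zero set and is contained in it, so the two coincide, completing the proof. An elementary alternative to the module computation is to identify $\ker M$ with the roots of $X\mp X^{p^l}$ lying in the ambient field and to count them directly; this is transparent for items~1--2 but for item~3 forces one to verify separately that $\GF{p^k}\subseteq T_l^{2l}(\GF{p^{2k}})$, which is exactly the obstacle the module viewpoint removes.
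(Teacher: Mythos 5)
Your proof is correct and follows essentially the same route as the paper's: the inclusion of each image in the corresponding zero set via the composition identities of Lemma~\ref{lem_properties}, followed by the observation that both sets have cardinality $p^{k-l}$. The only difference is that you actually justify the image cardinalities (via the $\GF{p^l}[\phi]/(\phi^{N}-1)$-module structure from the normal basis theorem), including the less obvious count for item~3 on $\GF{p^{2k}}$, where the paper simply asserts that the second and third items are ``similarly proved.''
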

\begin{proof}
   For $x\in \GF{p^k}$, by Item~\ref{lem_properties:3} of Lemma
  \ref{lem_properties}, $T_l^k\circ S_l^{2l}(x)=x-x^{p^k}=0$ proving
  the inclusion of   $S_l^{2l}(\GF{p^k})$ in $\{x\in \overline{\GF{p}}\,|\,T_l^k(x)=0\}$. We then conclude the equality from the fact that the
  two sets have the same cardinality $p^{k-l}$. The second and third items are similarly proved by using Item~\ref{lem_properties:2} of
  Lemma~\ref{lem_properties}.
\end{proof}

Based on the above lemma, we then deduce the kernels of $T^k_l$ and
$S^k_l$ restricted to $\GF{p^n}$. For the reader's convenience, we
present our results by three statements each corresponding to an
item of Lemma~\ref{T_l^k=0}.

\begin{theorem}\label{ker_T}
  The following holds true:
  \begin{displaymath}
    \{x\in \GF{p^n}\,|\,T_l^k(x)=0\}
    =\left\{\begin{array}{ll}
     \GF{p^{d}}, & \mbox{if $p\vert \frac{k}{L}$}\\
     S_{e}^{2e}(\GF{p^{d}}), & \mbox{otherwise.}
    \end{array}\right.
  \end{displaymath}
  Consequently,
    \begin{displaymath}
   \# \{x\in \GF{p^n}\,|\,T_l^k(x)=0\}
    =\left\{\begin{array}{ll}
    p^{d}, & \mbox{if $p\vert \frac{k}{L}$}\\
    p^{d-e}, & \mbox{otherwise.}
    \end{array}\right.
  \end{displaymath}
\end{theorem}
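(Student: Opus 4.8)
The plan is to reduce the computation from $\GF{p^n}$ to the much smaller field $\GF{p^d}$ and there to recognize $T_l^k$ as a scalar multiple of a relative trace. First I would observe that, by the first item of Lemma~\ref{T_l^k=0}, every zero of $T_l^k$ in $\overline{\GF{p}}$ has the shape $x-x^{p^l}$ with $x\in\GF{p^k}$ and hence already lies in $\GF{p^k}$. Consequently any such zero that lies in $\GF{p^n}$ lies in $\GF{p^k}\cap\GF{p^n}=\GF{p^{(n,k)}}=\GF{p^d}$, so that $\{x\in\GF{p^n}\mid T_l^k(x)=0\}=\{x\in\GF{p^d}\mid T_l^k(x)=0\}$ and it suffices to analyse $T_l^k$ on $\GF{p^d}$.

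The heart of the argument is then a bookkeeping of the Frobenius powers occurring in $T_l^k$ once its argument is restricted to $\GF{p^d}$. I would first record the elementary divisibility facts that $e\mid d$ and $(d,l)=e$ (whence $L=[d,l]=dl/e$), and that $\frac{d}{e}=\bigl(\frac{n}{e},\frac{k}{l}\bigr)$ divides $\frac{k}{l}$ with quotient $\frac{k/l}{d/e}=\frac{k}{L}$. On $\GF{p^d}$ the map $x\mapsto x^{p^l}$ has order $d/e$, and because $(l/e,d/e)=1$ its powers run exactly through $\mathrm{Gal}(\GF{p^d}/\GF{p^e})$, so that $\sum_{i=0}^{d/e-1}x^{p^{li}}=T_e^d(x)$ is the relative trace from $\GF{p^d}$ to $\GF{p^e}$. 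Since the number of summands $k/l$ of $T_l^k$ is the multiple $\frac{k}{L}\cdot\frac{d}{e}$ of this period, the sum breaks into $\frac{k}{L}$ identical copies of the trace, giving the key identity $T_l^k(x)=\frac{k}{L}\,T_e^d(x)$ for every $x\in\GF{p^d}$, where the coefficient is read modulo $p$.

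From this identity both cases fall out. If $p\mid\frac{k}{L}$ the coefficient vanishes, $T_l^k$ is identically zero on $\GF{p^d}$, and the kernel is all of $\GF{p^d}$, of size $p^d$. If $p\nmid\frac{k}{L}$, then on $\GF{p^d}$ the equation $T_l^k(x)=0$ is equivalent to $T_e^d(x)=0$; applying the first item of Lemma~\ref{T_l^k=0} with $(l,k)$ replaced by $(e,d)$ describes the zeros of $T_e^d$ (all of which lie in $\GF{p^d}$) as $S_e^{2e}(\GF{p^d})=\{x-x^{p^e}\mid x\in\GF{p^d}\}$, whose cardinality is $p^{d-e}$ since the $\GF{p}$-linear map $x\mapsto x-x^{p^e}$ has kernel $\GF{p^e}$ on $\GF{p^d}$. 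I expect the only delicate point to be the purely arithmetic verification that $(d,l)=e$ and that $\frac{d}{e}$ divides $\frac{k}{l}$ with quotient $\frac{k}{L}$; once the period of the Frobenius on $\GF{p^d}$ is pinned down to be exactly $d/e$, the collapse of $T_l^k$ into a scalar multiple of the trace, and hence the appearance of the dichotomy $p\mid\frac{k}{L}$, is forced.
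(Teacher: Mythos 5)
Your proposal is correct and follows essentially the same route as the paper: reduce to $\GF{p^d}$ via Item 1 of Lemma~\ref{T_l^k=0}, collapse $T_l^k$ on $\GF{p^d}$ to $\frac{k}{L}\,T_e^d$, and then split on whether $p$ divides $\frac{k}{L}$, invoking Lemma~\ref{T_l^k=0} again to describe $\ker(T_e^d)$. The only cosmetic difference is that you re-derive the identity $T_l^L(x)=T_e^d(x)$ on $\GF{p^d}$ by direct Frobenius/Galois bookkeeping, whereas the paper cites it as Item 4 of Lemma~\ref{lem_properties} together with the factorization $T_l^k=T_L^k\circ T_l^L$ from Item 1.
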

\begin{proof}
    By Item 1 of Lemma~\ref{T_l^k=0},
  \begin{displaymath}
    \{x\in \GF{p^n}\,|\,T_l^k(x)=0\}=S_l^{2l}(\GF{p^k})\cap \GF{p^n} \subset \GF{p^{(n,k)}}=\GF{p^d}.
  \end{displaymath}
 Therefore, by Item 1 of
  Lemma~\ref{lem_properties}
  \begin{eqnarray*}
    \{x\in \GF{p^n}\,|\,T_l^k(x)=0\}&=&\{x\in \GF{p^{d}}\,|\,T_l^k(x)=0\}\\
&=&\{x\in \GF{p^{d}}\,|\,\frac{k}{L}T_l^{L}(x)=0\}.
\end{eqnarray*}
Thus, if $p|\frac{k}{L}$, then
\begin{equation}\nonumber
\{x\in \GF{p^n}\,|\,T_l^k(x)=0\}=\GF{p^{d}},
\end{equation}
and if $p\nmid\frac{k}{L}$, then
\begin{align*}
  \{x\in\GF{p^n}\,|\,T_l^k(x)=0\}&=\{x\in
                                   \GF{p^{d}}\,|\,T_l^{L}(x)=0\}
\\
&=\{x\in \GF{p^{d}}\,|\,T_{e}^{d}(x)=0\} \text{ (by Item
4 of Lemma~\ref{lem_properties})}\\
                                 &=S_{e}^{2e}(\GF{p^{d}}) \text{ (by Item
1 of Lemma~\ref{T_l^k=0})}.
\end{align*}
\end{proof}

\begin{theorem}\label{ker_S_even}
Suppose that $\frac{k}{l}$ is even.
\begin{enumerate}
\item
If $\frac{d}{e}$ is even, then
\[
  \{x\in \GF{p^n}\,|\,S_l^k(x)=0\}=
  \left\{\begin{array}{ll}
           \GF{p^{d}}, & \mbox{if $p|\frac{k}{L}$} \\
           T_{e}^{2e}(\GF{p^{d}}), & \mbox{otherwise}
  \end{array}\right.
\]
and consequently
\[
  \#\{x\in \GF{p^n}\,|\,S_l^k(x)=0\}=
  \left\{\begin{array}{ll}
           p^{d}, & \mbox{if $p|\frac{k}{L}$} \\
           p^{d-e}, & \mbox{otherwise.}
  \end{array}\right.
\]
\item If $\frac{d}{e}$ is odd, then
\[
\{x\in \GF{p^n}\,|\,S_l^k(x)=0\}=\GF{p^{d}}
\]
and consequently
\[
\#\{x\in \GF{p^n}\,|\,S_l^k(x)=0\}=p^{d}.
\]
\end{enumerate}
\end{theorem}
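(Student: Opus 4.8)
The plan is to mirror the proof of Theorem~\ref{ker_T}: first confine the kernel to $\GF{p^{d}}$ and then collapse $S_l^k$ to a scalar multiple of a shorter sum via the composition identities of Lemma~\ref{lem_properties}. Since $\frac kl$ is even, Item~2 of Lemma~\ref{T_l^k=0} gives $\{x\in\overline{\GF{p}}\mid S_l^k(x)=0\}=T_l^{2l}(\GF{p^k})\subseteq\GF{p^k}$; intersecting with $\GF{p^n}$ and using $\GF{p^k}\cap\GF{p^n}=\GF{p^{(n,k)}}=\GF{p^{d}}$ yields
\[
\{x\in\GF{p^n}\mid S_l^k(x)=0\}=\{x\in\GF{p^{d}}\mid S_l^k(x)=0\}.
\]
Thus everything reduces to computing the zero set of $S_l^k$ inside $\GF{p^{d}}$.

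Next I would exploit that $L=[d,l]$ satisfies $d\mid L$ and $L/l=d/e$ (using $(d,l)=e$). For $x\in\GF{p^{d}}$ the element $z:=S_l^L(x)$ again lies in $\GF{p^{d}}$, so $z^{p^{L}}=z$. Applying the composition identities of Item~1 of Lemma~\ref{lem_properties} with outer index $L$ gives $S_l^k=T_L^k\circ S_l^L$ when $L/l$ is even and $S_l^k=S_L^k\circ S_l^L$ when $L/l$ is odd. Evaluating the outer polynomial at $z$ collapses it to a scalar, since $z^{p^{Li}}=z$: namely $T_L^k(z)=\frac kL\,z$ and $S_L^k(z)=\big(\sum_{i=0}^{k/L-1}(-1)^i\big)z$. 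Hence on $\GF{p^{d}}$ we have $S_l^k(x)=\tfrac kL\,S_l^L(x)$ when $L/l$ is even, and $S_l^k(x)=\big(\sum_{i=0}^{k/L-1}(-1)^i\big)S_l^L(x)$ when $L/l$ is odd.

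Now I would split on the parity of $d/e=L/l$, matching the two items. If $d/e$ is odd, the key observation is that $\frac kL=\frac{k/l}{d/e}$ is forced to be even (an even integer divided by an odd one), so the alternating sum $\sum_{i=0}^{k/L-1}(-1)^i$ vanishes and $S_l^k\equiv 0$ on $\GF{p^{d}}$, giving kernel $\GF{p^{d}}$ of size $p^{d}$. If $d/e$ is even, then $S_l^k(x)=\frac kL S_l^L(x)$ on $\GF{p^{d}}$: when $p\mid\frac kL$ the scalar vanishes in characteristic $p$ and the kernel is all of $\GF{p^{d}}$; when $p\nmid\frac kL$ the kernel equals $\{x\in\GF{p^{d}}\mid S_l^L(x)=0\}$. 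For the latter I would invoke Item~4 of Lemma~\ref{lem_properties} (with $n,k$ replaced by $d,l$, legitimate since $\frac{[d,l]}{l}=L/l$ is even) to rewrite $S_l^L(x)=S_{e}^{d}(x)$ on $\GF{p^{d}}$, then Item~2 of Lemma~\ref{T_l^k=0} (with $l,k$ replaced by $e,d$, whose hypothesis $d/e$ even holds) to identify this zero set as $T_{e}^{2e}(\GF{p^{d}})$. The cardinality $p^{d-e}$ follows by counting the kernel of the $\GF{p^{e}}$-linear map $x\mapsto x+x^{p^{e}}$ on $\GF{p^{d}}$, which has size $p^{e}$ because $2e\mid d$ forces its $p^{e}$ roots into $\GF{p^{2e}}\subseteq\GF{p^{d}}$.

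The main obstacle I anticipate is the sign bookkeeping in the reduction step: unlike the purely additive $T_l^k$, the alternating signs of $S_l^k$ interact with the block length $L/l$, and the clean dichotomy hinges on recognizing that the parity of $L/l=d/e$ decides whether the outer composition is $T_L^k$ (summing to multiplication by $\frac kL$) or $S_L^k$ (summing to an alternating count), and, crucially, that $d/e$ odd forces $\frac kL$ even so that this count collapses to $0$. Everything else is routine manipulation already licensed by Lemmas~\ref{lem_properties} and~\ref{T_l^k=0}.
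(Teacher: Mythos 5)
Your proposal is correct and follows essentially the same route as the paper's own proof: reduce to $\GF{p^{d}}$ via Item~2 of Lemma~\ref{T_l^k=0}, factor $S_l^k$ through $S_l^L$ using Item~1 of Lemma~\ref{lem_properties} according to the parity of $L/l=d/e$, collapse the outer factor to the scalar $\frac kL$ (or to the vanishing alternating sum when $d/e$ is odd, since then $k/L$ is even), and in the remaining subcase identify $\ker S_l^L$ on $\GF{p^{d}}$ with $\ker S_e^d=T_e^{2e}(\GF{p^{d}})$ via Items~4 and~2 of the two lemmas. The sign bookkeeping and the cardinality count are handled exactly as in the paper.
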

\begin{proof}
By Item 2 of Lemma~\ref{T_l^k=0}, when $\frac{k}{l}$ is even, we
know that $\{x\in
\GF{p^n}\,|\,S_l^k(x)=0\}=T_l^{2l}(\GF{p^k})\cap\GF{p^n}\subset
\GF{p^{d}}$  and thus
\begin{align*}
\{x\in
\GF{p^n}\,|\,S_l^k(x)=0\}=\{x\in\GF{p^{d}}\,|\,S_l^{k}(x)=0\}.
\end{align*}

 Now, suppose that $\frac{d}{e}=\frac{d}{(d,l)}=\frac{L}{l}$ is even.
 Then, by Item 1 of Lemma~\ref{lem_properties}
\begin{align*}
\{x\in
\GF{p^{d}}\,|\,S_l^k(x)=0\}&=\{x\in\GF{p^{d}}\,|\,T^k_{L}\circ S_l^{L}(x)=0\}\\
&=\{x\in\GF{p^{d}}\,|\,\frac{k}{L}S_l^{L}(x)=0\}.
\end{align*}
\\Therefore, if $p|\frac{k}{L}$, then
\begin{equation}\nonumber
\{x\in \GF{p^n}\,|\,S_l^k(x)=0\}=\GF{p^{d}},
\end{equation}
 and if $p\nmid\frac{k}{L}$, then
\begin{align*}
\{x\in\GF{p^n}\,|\,S_l^k(x)=0\}&=\{x\in
\GF{p^{d}}\,|\,S_l^{L}(x)=0\}\\
&=\{x\in \GF{p^{d}}\,|\,S_{e}^{d}(x)=0\} \text{
(by Item 4 of Lemma~\ref{lem_properties})}\\
&=T_{e}^{2e}(\GF{p^{d}})\text{ (by Item 2 of Lemma~\ref{T_l^k=0})}.
\end{align*}

Suppose now that $\frac{d}{e}=\frac{L}{l}$ is odd. In this case,
$\frac{k}{L}$ is even as $\frac{k}{l}=\frac{k}{L}\cdot \frac{L}{l}$
is even by the assumption. Thus, we have
\begin{align*}
\{x\in \GF{p^n}\,|\,S_l^k(x)=0\}&=\{x\in
\GF{p^{d}}\,|\,S_l^k(x)=0\}\\
&=\{x\in \GF{p^{d}}\,|\,S_{L}^k\circ S_l^{L}(x)=0\} \text{
(by Item 1 of Lemma~\ref{lem_properties})}\\
&=\GF{p^{d}}
\end{align*}
because $S_l^{L}(x)\in\GF{p^{d}}\subset \GF{p^L}$ for
$x\in\GF{p^{d}}$.
\end{proof}

\begin{theorem}\label{ker_S_odd}
  Suppose that $\frac kl$ is odd.
\begin{enumerate}
\item
When $\frac{n}{d}$ is odd,
\[
\{x\in \GF{p^n}\,|\,S_l^k(x)=0\}=\{0\}
\]
and consequently
\[
\#\{x\in \GF{p^n}\,|\,S_l^k(x)=0\}=1.
\]
\item
When $\frac{n}{d}$ is even,
\[
\{x\in
\GF{p^n}\,|\,S_l^k(x)=0\}=
\left\{\begin{array}{ll}
         S_{d}^{2d}(\GF{p^{2d}}), & \mbox{if $p|\frac{k}{L}$} \\
        S_{d}^{2d}\circ T_{e}^{2e}(\GF{p^{2d}}), & \mbox{otherwise}
       \end{array}\right.
\]
and consequently
\[
\#\{x\in \GF{p^n}\,|\,S_l^k(x)=0\}= \left\{\begin{array}{ll}
         p^{d}, & \mbox{if $p|\frac{k}{L}$} \\
         p^{d-e}, & \mbox{otherwise.}
       \end{array}\right.
\]
\end{enumerate}
\end{theorem}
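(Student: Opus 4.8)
The plan is to begin, as in the proofs of Theorem~\ref{ker_T} and Theorem~\ref{ker_S_even}, from the description of the zero set over $\overline{\GF p}$ furnished by Item~3 of Lemma~\ref{T_l^k=0}, and then intersect it with $\GF{p^n}$. The structural fact I would lean on is that every zero has the form $z=w-w^{p^k}$ with $w\in\GF{p^{2k}}$, so that $z^{p^k}=-z$ and hence $z^{p^{2k}}=z$. Thus a zero lying in $\GF{p^n}$ also lies in $\GF{p^{2k}}$, i.e.\ in $\GF{p^{(n,2k)}}$. Since $(\frac nd,\frac kd)=1$, one computes $(n,2k)=d\,(\frac nd,2)$, which is $d$ when $\frac nd$ is odd and $2d$ when $\frac nd$ is even (and in the latter case $2d\mid n$). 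This is exactly the case split of the statement. I work with $p$ odd, the only essential use being $2z=0\Rightarrow z=0$; for $p=2$ one has $S_l^k=T_l^k$ and Theorem~\ref{ker_T} already applies, in accordance with the Remark in the Introduction.

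First case, $\frac nd$ odd. Here a zero $z$ lies in $\GF{p^d}$, so $z^{p^k}=z$ because $d\mid k$; combined with $z^{p^k}=-z$ this gives $2z=0$, whence $z=0$. The kernel is $\{0\}$, of cardinality $1$.

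Second case, $\frac nd$ even. The discussion above shows that the kernel in $\GF{p^n}$ equals the kernel in $\GF{p^{2d}}$ (the inclusion $\GF{p^{2d}}\subseteq\GF{p^n}$ uses $2d\mid n$). On $\GF{p^{2d}}$ I would use that $\frac kd$ is odd, so $k\equiv d\pmod{2d}$ and therefore $z^{p^k}=z^{p^d}$ for all $z\in\GF{p^{2d}}$; together with $z^{p^k}=-z$ this places every kernel element in $V:=\{z\in\GF{p^{2d}}\mid z^{p^d}=-z\}$, which is $S_d^{2d}(\GF{p^{2d}})$ (the image $S_d^{2d}(\GF{p^{2d}})$ is contained in $V$ and both have cardinality $p^d$). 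It then remains to compute $S_l^k$ on $V$, where the Frobenius $\phi\colon x\mapsto x^p$ satisfies $\phi^d=-1$. Writing $S_l^k=S_L^k\circ S_l^L$ (Item~1 of Lemma~\ref{lem_properties}, valid since $\frac Ll=\frac de$ is odd) and noting that $V$ is $\phi$-stable while $S_L^k$ acts on $V$ as the scalar $\frac kL$ (each summand $(-1)^j\phi^{Lj}$ restricts to $(-1)^j(-1)^{(l/e)j}=1$, using that $\frac le$ is odd), I obtain $S_l^k=\frac kL\,S_l^L$ on $V$. A short reindexing of the exponents of $S_l^L$ modulo $d$, tracking signs and using that both $\frac le$ and $\frac de$ are odd, then shows that $S_l^L$ and $S_e^d$ induce the same map on $V$; hence $S_l^k=\frac kL\,S_e^d$ on $V$.

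The conclusion splits according to whether $p\mid\frac kL$. If $p\mid\frac kL$ then $S_l^k$ vanishes identically on $V$, so the kernel is all of $V=S_d^{2d}(\GF{p^{2d}})$, of size $p^d$. If $p\nmid\frac kL$ then the kernel is $\{z\in V\mid S_e^d(z)=0\}$; since every zero of $S_e^d$ already satisfies $z^{p^d}=-z$ and lies in $\GF{p^{2d}}$ (Item~3 of Lemma~\ref{T_l^k=0} with $(k,l)$ replaced by $(d,e)$, legitimate because $\frac de$ is odd), this is the full zero set $S_d^{2d}\circ T_e^{2e}(\GF{p^{2d}})$, of size $p^{d-e}$. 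I expect the main obstacle to be the identity ``$S_l^L=S_e^d$ on $V$'': unlike the even case, the field $\GF{p^L}$ need not be contained in $\GF{p^{2d}}$, so one cannot collapse $S_l^L$ by the naive scalar argument used in the even case, and the substitute is precisely the modular reindexing described above.
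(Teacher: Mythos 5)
Your proposal is correct, but in the even case it takes a genuinely different, and arguably cleaner, route than the paper. Both arguments start from Item~3 of Lemma~\ref{T_l^k=0} and the reduction to $\GF{p^{(n,2k)}}$ with $(n,2k)=d$ or $2d$ according to the parity of $\frac nd$. For $\frac nd$ odd the two proofs essentially coincide ($z^{p^k}=-z$ together with $z\in\GF{p^d}$ forces $2z=0$); your remark that this needs $p\neq 2$ is well taken, since the statement fails for $p=2$ in general (compare Theorem~\ref{ker_T}) and the paper's own step ``$-y=y\Rightarrow y=0$'' silently assumes odd characteristic. For $\frac nd$ even the paper proves the operator identity $S_d^{2d}\circ S_l^k=\frac kL S_{e}^{2d}$ on all of $\GF{p^{2d}}$ and then must establish the kernel by two separate inclusions; when $p\nmid\frac kL$ the reverse inclusion costs it the auxiliary equality $\ker(S_e^d)=\ker(S_l^L)\cap\GF{p^{2d}}$, proved by one inclusion plus a cardinality count (showing $S_l^L$ permutes $\GF{p^L}$, computing the image of the set $A$, etc.). You instead localize the entire kernel into the eigenspace $V=\{z\in\GF{p^{2d}}\mid z^{p^d}=-z\}=S_d^{2d}(\GF{p^{2d}})$ via $z^{p^k}=-z$ and $z^{p^k}=z^{p^d}$, and then diagonalize: on $V$ one has $\phi^L=-\mathrm{id}$ because $\frac Ld=\frac le$ is odd, so $S_L^k$ acts as the scalar $\frac kL$ and $S_l^k=\frac kL S_l^L=\frac kL S_e^d$ on $V$, from which both subcases are read off in one pass, the full zero set of $S_e^d$ being contained in $V$. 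This eliminates the two-sided inclusion and the counting argument entirely. The one step you leave as a sketch, $S_l^L=S_e^d$ on $V$, does go through and should be written out: writing $li=e\sigma(i)+dq_i$ with $0\le e\sigma(i)<d$ and $\sigma$ a permutation of $\{0,\dots,\frac de-1\}$, one has $\phi^{li}=(-1)^{q_i}\phi^{e\sigma(i)}$ on $V$, and reducing $(l/e)\,i=\sigma(i)+(d/e)\,q_i$ modulo $2$ with $\frac le$ and $\frac de$ both odd gives $(-1)^i(-1)^{q_i}=(-1)^{\sigma(i)}$, so the reindexed sum is exactly $S_e^d(z)$. This identity is the analogue, one level down, of the paper's relation $S_l^{2L}=S_e^{2d}$ on $\GF{p^{2d}}$, and the oddness of $\frac kd$ (hence of $\frac Ld=\frac le$) that you need is available exactly as in the paper, from $\frac nd$ even and $(\frac nd,\frac kd)=1$.
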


\begin{proof}  First of all, note that $\frac{L}{l}$ and $\frac{k}{L}$ are odd being divisors of
odd $\frac{k}{l}$ and
 by Item 3 of Lemma~\ref{T_l^k=0} one has
\begin{equation}\nonumber
\{x\in\GF{p^n}\,|\,S_l^k(x)=0\}=S_k^{2k}\circ
T_l^{2l}(\GF{p^{2k}})\cap \GF{p^n}=\{x\in
\GF{p^{(n,2k)}}\,|\,S_l^{k}(x)=0\}.
\end{equation}

Suppose that $\frac{n}{d}$ is odd. Then, $(n, 2k)=d$
 and we have
\begin{align*}
\{x\in \GF{p^n}\,|\,S_l^k(x)=0\}&=\{x\in
\GF{p^{d}}\,|\,S_l^k(x)=0\}\\
&=\{x\in
\GF{p^{d}}\,|\,S^k_{L}\circ S_l^{L}(x)=0\} \text{ (by Item 1 of Lemma~\ref{lem_properties})}\\
&=\{x\in \GF{p^{d}}\,|\, S_l^{L}(x)=0\}\\
&=\{S_{L}^{2L}\circ T_l^{2l}(\beta)\in
\GF{p^{d}}\,|\,\beta\in\GF{p^{2L}}\}  \text{ (by Item 3 of
Lemma~\ref{T_l^k=0})}.
\end{align*}
Now, if $S_{L}^{2L}\circ T_l^{2l}(\beta)\in \GF{p^{d}}$ for
$\beta\in\GF{p^{2L}}$, then we have
\begin{align*}
(S_{L}^{2L}\circ T_l^{2l}(\beta))^{p^{L}}=S_{L}^{2L}\circ
T_l^{2l}(\beta)&\Longleftrightarrow -S_{L}^{2L}\circ
T_l^{2l}(\beta)=S_{L}^{2L}\circ T_l^{2l}(\beta)\\
&\Longleftrightarrow S_{L}^{2L}\circ T_l^{2l}(\beta)=0.
\end{align*}
 Thus, in that case
$$\{x\in \GF{p^n}\,|\,S_l^k(x)=0\}=\{0\}.$$

Now, suppose that $\frac{n}{d}$ is even. Then, $(n, 2k)=2d$ and
$$\{x\in \GF{p^n}\,|\,S_l^k(x)=0\}=\{x\in
\GF{p^{2d}}\,|\,S_l^k(x)=0\}.$$
 Since $\frac{L}{l}$ is odd, from Item 5 of Lemma~\ref{lem_properties} it follows
\begin{equation}\label{eq1}
 S_l^{2L}(x)+S_l^{2L}(x)^{p^{L}}=0
\end{equation}
 for every $x\in\GF{p^{2d}}.$ And,
 $\frac{k}{d}$ and $\frac{L}{d}$ (being a divisor of $\frac{k}{d}$) are
 odd since  $\frac{n}{d}$ is even and $(\frac{n}{d}, \frac{k}{d})=1$. Therefore,  for every $x\in\GF{p^{2d}}$, it holds $$x^{p^L}=(x^{p^{\frac{L-d}{d}\cdot
 d}})^{p^d}=x^{p^d},$$
 $$x^{p^k}=(x^{p^{\frac{k-d}{d}\cdot
 d}})^{p^d}=x^{p^d}$$
 and hence
\begin{equation}\label{eq2}
S_{L}^{2L}(x)=S_{d}^{2d}(x) \text{ and } T_k^{2k}(x)=T_d^{2d}(x).
\end{equation}
Moreover, since $\frac{L}{d}=\frac{[d,l]}{d}$ is odd, one has $(2d,
l)=(d,l)=e$ and
$[2d,l]=\frac{2dl}{(2d,l)}=2\frac{dl}{(d,l)}=2[d,l]=2L$. Therefore,
by Item 4 of Lemma~\ref{lem_properties}, for every $x\in\GF{p^{2d}}$
we have
$S_l^{2L}(x)=S_l^{[2d,l]}(x)=S_{(2d,l)}^{2d}(x)=S_{(d,l)}^{2d}(x)=S_{e}^{2d}(x)$,
that is,
\begin{equation}\label{eq3}
S_l^{2L}(x)=S_{e}^{2d}(x).
\end{equation}
Then, for every $x\in\GF{p^{2d}}$ one has
\begin{align*}
S_{d}^{2d}\circ S_l^k(x)&=S_{L}^{2L}\circ
S_l^k(x)\text{ (by \eqref{eq2})}\\
&=S_{L}^{2L}\circ S_{L}^{k}\circ
S_l^{L}(x) \text{ (by Item 1 of Lemma~\ref{lem_properties})}\\
&=S_{L}^{k}\circ S_{L}^{2L}\circ
S_l^{L}(x)\\
&=S_{L}^{k}\circ S_{l}^{2L}(x) \text{ (again by Item 1 of Lemma~\ref{lem_properties})}\\
&=\frac{k}{L} S_{l}^{2L}(x)\text{ (by \eqref{eq1})}\\
&=\frac{k}{L}S_{e}^{2d}(x) \text{ (by \eqref{eq3})},
\end{align*}
that is,
\begin{equation}\label{eq4}
S_l^k(S_{d}^{2d}(x))=\left\{\begin{array}{ll}
         0, & \mbox{if $p|\frac{k}{L}$} \\
         S_{e}^{2d}(x), & \mbox{otherwise.}
       \end{array}\right.
\end{equation}
Thus, when $p|\frac{k}{L}$, it holds
$$
\{x\in\GF{p^{2d}}\,|\,S_l^k(x)=0\}\supset
S_{d}^{2d}(\GF{p^{2d}})=\ker(T_d^{2d}),
$$
where the equality is from Item 1 of Lemma~\ref{T_l^k=0}. By the
way, by \eqref{eq2} and  Item 2 of Lemma~\ref{lem_properties},
$T_d^{2d}(x)=T_k^{2k}(x)=T_l^{2l}\circ S_l^k(x)$ for every
$x\in\GF{p^{2d}}$ and therefore
$\{x\in\GF{p^{2d}}\,|\,S_l^k(x)=0\}\subset \ker(T_d^{2d}).$ Hence,
when $p|\frac{k}{L}$, we get
\begin{equation*}
\{x\in\GF{p^n}\,|\,S_l^k(x)=0\}=S_{d}^{2d}(\GF{p^{2d}}).
\end{equation*}

On the other hand, if $p\nmid\frac{k}{L}$, then by \eqref{eq4}
\begin{align*}
\{x\in\GF{p^{2d}}\,|\,S_l^k(x)=0\}&\supset
\{S_d^{2d}(x)\,|\,S_{e}^{2d}(x)=0, x\in\GF{p^{2d}}\}\\
&=\{S_d^{2d}\circ T_{e}^{2e}(\beta)\,|\,\beta\in\GF{p^{2d}}\} \text{
(by Item 2 of Lemma~\ref{T_l^k=0})}\\
&=\ker(S_e^d)\text{ (by Item 3 of Lemma~\ref{T_l^k=0} since
$\frac{d}{e}=\frac{L}{l}$ is odd)}.
\end{align*}
By the way, when $x\in\GF{p^{2d}}$, we can write
$$S_l^k(x)=\frac{k-L}{2L}S_l^{2L}(x)+S_l^L(x)$$
because $\frac{k}{L}$ is odd, and therefore if $S_l^k(x)=0$ for
$x\in\GF{p^{2d}}$, then $S_l^L(x)=0$ (since $S_l^{2L}(x)=0$ by
\eqref{eq3} and \eqref{eq4} when $S_l^k(x)=0$), that is, one has
$$\{x\in\GF{p^{2d}}\,|\,S_l^k(x)=0\}\subset \ker(S_{l}^{L})\cap \GF{p^{2d}}.$$ Thus, to conclude the
theorem it is sufficient to show:
\begin{equation}\label{eq5}
\ker(S_e^d)=\ker(S_{l}^{L})\cap \GF{p^{2d}}.
\end{equation}

To begin with, let us show $$\ker(S_e^d)\subset \ker(S_{l}^{L})\cap
\GF{p^{2d}}.$$ In fact, if $y\in \ker(S_e^d)$ or equivalently $
y=S_d^{2d}\circ T_{e}^{2e}(\beta)$ for some $\beta\in \GF{p^{2d}}$,
then
\begin{align*}
S_l^L(y)&=S_l^L\circ S_d^{2d}\circ T_{e}^{2e}(\beta)\\
&=S_l^L\circ S_L^{2L}\circ T_{e}^{2e}(\beta) \text{ (by \eqref{eq2})}\\
&=S_l^{2L}\circ T_{e}^{2e}(\beta)\text{ (by Item 1 of
Lemma~\ref{lem_properties})}\\
&=S_e^{2d}\circ T_{e}^{2e}(\beta)\text{ (by \eqref{eq3})}\\
&=S_{2d}^{4d}(\beta) \text{ (by Item 2 of
Lemma~\ref{lem_properties})}\\&=0 \text{ (since  $\beta\in
\GF{p^{2d}}$)}.
\end{align*}

Next, we prove $$\#\ker(S_e^d)= \#\{\ker(S_{l}^{L})\cap
\GF{p^{2d}}\}$$ which will conclude \eqref{eq5}. Let
$A:=\ker(S_{e}^{2d})=T_{e}^{2e}(\GF{p^{2d}})$. Then, by \eqref{eq3},
$A=\ker(S_{l}^{2L})\cap \GF{p^{2d}}$, and since
$S_{l}^{2L}=S_{L}^{2L}\circ S_l^L$ by Item 1 of
Lemma~\ref{lem_properties},
$$\ker(S_{l}^{L})\cap \GF{p^{2d}}\subset A.$$
Hence, now we determine $S_l^L(A)$ which will make us possible to
compute $\#\{\ker(S_{l}^{L})\cap \GF{p^{2d}}\}$. First, since
$S_d^{2d}(S_l^L(A))\overset{\eqref{eq2}}{=}S_L^{2L}(S_l^L(A))=S_l^{2L}(A)=\{0\}$,
it holds
\begin{equation}\label{eq6}
S_l^L(A)\subset \GF{p^d}.
\end{equation}
Then, since
$\frac{d}{e}=\frac{L}{l}$ is odd, by Item2 of
Lemma~\ref{lem_properties}, $S_{e}^d(A)=S_{e}^d\circ
T_{e}^{2e}(\GF{p^{2d}})=T_d^{2d}(\GF{p^{2d}})=\GF{p^d}$ and so
\begin{equation}\label{eq7}
\GF{p^d} \subset A.
\end{equation}
 Now, let us prove that $S_l^L$ is a permutation on $\GF{p^d}$. In fact, if
$y \in\GF{p^L}$ is an element in $\ker(S_l^L)$, then by Item 3 of
Lemma~\ref{T_l^k=0} we can write $y=S_L^{2L}\circ T_l^{2l}(\beta)$
for some $\beta\in \GF{p^{2L}}$ and one has
$y=y^{p^L}=(S_L^{2L}\circ T_l^{2l}(\beta))^{p^L}=-S_L^{2L}\circ
T_l^{2l}(\beta)=-y$, i.e. $y=0$. Therefore, $\ker(S_l^L)\cap
\GF{p^L}=\emptyset$ and $S_l^L$ is a permutation on $\GF{p^L}$ and
subsequently on $\GF{p^d}$. By acting $S_l^L$ on both sides of
\eqref{eq7} we get
\begin{equation}\label{eq8}
\GF{p^d} \subset S_l^L(A).
\end{equation}
Combined \eqref{eq6} and \eqref{eq8} proves $$S_l^L(A)=\GF{p^d}.$$
From here it follows $$\#\{\ker(S_{l}^{L})\cap
\GF{p^{2d}}\}=\#A/p^d= p^{(2d-e)-d}=p^{d-e}=\#\ker(S_e^d).$$
\end{proof}

\section{On particular solutions of (\ref{equation:T}) and
  (\ref{equation:S})}
\label{subsection:roots}

In this section, we give particular solutions for each of the two
equations (\ref{equation:T}) and (\ref{equation:S}) as well as
characterizations of the $a$'s in $\GF{p^n}$ for which the equations
(\ref{equation:T}) and (\ref{equation:S}) have at least one solution
in $\GF{p^n}$. We present these results by three statements  as in
Section~\ref{subsection:kernels} each of them corresponding again to
the three cases defined in Lemma~\ref{T_l^k=0}.

\begin{theorem}\label{T_kneq2} Let $\delta\in \GF{p^n}^*$ and $\delta_1\in \GF{p^d}^*$ be any elements such that
$T_{d}^n(\delta)=1$ and  $T_{e}^d(\delta_1)=1$.
\begin{enumerate}
\item  When $p|\frac{k}{L}$, there exists a solution in $\GF{p^n}$ to
the equation $T_l^k(X)=a$ if and only if $T_{d}^n(a)=0$. In that
case,
\begin{equation}\nonumber
x_0=S_l^{2l}(\sum_{i=0}^{\frac{n}{d}-2}\sum_{j=i+1}^{\frac{n}{d}-1}\delta^{p^{kj}}a^{p^{ki}})
\end{equation}
is a particular $\GF{p^n}-$solution to the equation $T_l^k(X)=a$.

\item When
$p\nmid\frac{k}{L}$, there exists a solution in $\GF{p^n}$ to the
equation $T_l^k(X)=a$ if and only if $S_{e}^{2e}\circ T_{d}^n(a)=0$.
In that case,
\begin{equation}\nonumber
  x_0=y_0+\frac{L}{k}(a-T_l^k(y_0))\delta_1,
\end{equation}
where
\begin{equation}\nonumber
y_0=\sum_{i=0}^{\frac{n}{d}-2}\sum_{j=i+1}^{\frac{n}{d}-1}\delta^{p^{kj}}S_l^{2l}(a)^{p^{ki}},
\end{equation}
is a particular $\GF{p^n}-$solution to the equation $T_l^k(X)=a$.
\end{enumerate}
\end{theorem}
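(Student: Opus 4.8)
The plan is to isolate two reusable computations and then separate existence (necessity) from the exhibition of the explicit solution (sufficiency). First I would record a \emph{trace-collapse} identity: for $x\in\GF{p^n}$, Item~4 and the composition rule (Item~1) of Lemma~\ref{lem_properties} give $T_d^n(T_l^k(x))=T_k^{[n,k]}(T_l^k(x))=T_l^{[n,k]}(x)$, and a period count on the Frobenius orbit of $x$ (using the gcd fact $(n,l)=(d,l)=e$, valid since $l\mid k$, so that $L=dl/e$) collapses this to $T_l^{[n,k]}(x)=\frac kL\,T_e^n(x)$. Since $T_e^n(x)\in\GF{p^e}$, this single identity delivers both necessity statements at once: if $T_l^k(x)=a$ then $T_d^n(a)=\frac kL T_e^n(x)$, which vanishes when $p\mid\frac kL$ (Case~1), while in general $S_e^{2e}(T_d^n(a))=\frac kL S_e^{2e}(T_e^n(x))=0$ (Case~2).

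Second I would establish the telescoping identity behind the double sums. With $N=\frac nd$ and $W(c):=\sum_{0\le i<j\le N-1}\delta^{p^{kj}}c^{p^{ki}}$, shifting indices and using $T_d^n(\delta)=1$ shows $S_k^{2k}(W(c))=W(c)-W(c)^{p^k}=c$ whenever $T_d^n(c)=0$; this additive Hilbert~90 computation is the computational heart of the argument. Case~1 sufficiency is then immediate, since $x_0=S_l^{2l}(W(a))$ and, by Item~3 of Lemma~\ref{lem_properties}, $T_l^k(x_0)=T_l^k\circ S_l^{2l}(W(a))=S_k^{2k}(W(a))=a$ under the hypothesis $T_d^n(a)=0$.

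For Case~2 I would proceed in three moves. Put $b:=S_l^{2l}(a)$, so that $y_0=W(b)$. Since $T_d^n$ and $S_l^{2l}$ commute (both are $\GF{p}$-polynomials in the Frobenius), $T_d^n(b)=S_l^{2l}(T_d^n(a))$, and for $T_d^n(a)\in\GF{p^d}$ one has $S_l^{2l}(T_d^n(a))=0\iff T_d^n(a)\in\GF{p^e}\iff S_e^{2e}(T_d^n(a))=0$; hence the hypothesis forces $T_d^n(b)=0$, and the telescoping identity yields $S_k^{2k}(y_0)=b$. Next, by commutativity and Item~3, $S_l^{2l}(T_l^k(y_0))=S_k^{2k}(y_0)=b=S_l^{2l}(a)$, so $c:=T_l^k(y_0)-a$ lies in $\ker S_l^{2l}\cap\GF{p^n}=\GF{p^e}$. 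Finally, a period count on $\GF{p^d}$ gives $T_l^k(\delta_1)=\frac kL T_e^d(\delta_1)=\frac kL$, and since $-c\in\GF{p^e}\subseteq\GF{p^l}$ this scalar passes through every Frobenius power occurring in $T_l^k$; therefore $T_l^k(x_0)=T_l^k(y_0)+(a-T_l^k(y_0))=a$.

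The main obstacle is precisely this last step: one must notice that the correction $a-T_l^k(y_0)$ lies not merely in $\GF{p^n}$ but in $\GF{p^e}\subseteq\GF{p^l}$, so that although $T_l^k$ is only $\GF{p}$-linear it is $\GF{p^l}$-semilinear and the scalar factors outside, after which $T_l^k(\delta_1)=\frac kL$ cancels the prefactor $\frac Lk$ (which is well defined mod $p$ exactly because $p\nmid\frac kL$). The remaining verifications — that each ingredient of $x_0$ lands in $\GF{p^n}$, and the two routine period counts used above — are straightforward.
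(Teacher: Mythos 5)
Your proposal is correct and follows essentially the same route as the paper: the same trace-collapse identity $T_d^n\circ T_l^k=\frac kL T_e^n$ on $\GF{p^n}$ for necessity, the same telescoping (additive Hilbert 90) evaluation of the double sum, and the same observation that $a-T_l^k(y_0)\in\ker(S_l^{2l})\cap\GF{p^n}=\GF{p^e}$ so that the correction term $\frac Lk(a-T_l^k(y_0))\delta_1$ works because the scalar factors through $T_l^k$. The only organizational difference is that you get the ``if'' direction directly from the explicit solution, whereas the paper closes the inclusion $T_l^k(\GF{p^n})\subseteq\{a\,:\,\dots\}$ by a cardinality count via Theorem~\ref{ker_T}; your variant is equally valid and makes the theorem independent of that count.
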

\begin{proof}
  Let $a=T_l^k(x_0)$ for some $x_0\in\GF{p^n}$. Then
\begin{align*}
T_{d}^n(a)&=T_{d}^n\circ T_l^k(x_0)\\
&=T_{d}^n\circ
T_{L}^k\circ T_l^{L}(x_0) \text{ (by Item 1 of Lemma~\ref{lem_properties})}\\
&=T_{L}^k\circ T_l^{L}\circ T_{d}^n(x_0)\\
&=T_{L}^k\circ T_{e}^{d}\circ
T_{d}^n(x_0)\text{ (by Item 4 of Lemma~\ref{lem_properties})}\\
&=T_{L}^k\circ T_{e}^{n}(x_0)\text{ (by Item 1 of
Lemma~\ref{lem_properties})}\\&=\frac{k}{L}T_{e}^{n}(x_0) \text{
(since $T_{e}^{n}(x_0)\in \GF{p^e}\subset \GF{p^L}$)}.
\end{align*}
Thus, if $p|\frac{k}{L}$, then $T_{d}^n(a)=0$, and if
$p\nmid\frac{k}{L}$, then $S_{e}^{2e}\circ
T_{d}^n(a)=S_{e}^{2e}\circ T_{e}^{n}(x_0)=S_n^{2n}(x_0)= 0$ where we
applied Item 3 of Lemma~\ref{lem_properties}. In other words, if
$p|\frac{k}{L}$, then
\begin{equation}\label{im1}
T_l^k(\GF{p^n})\subset\{a\in\GF{p^n}\,|\,T_{d}^n(a)=0\},
\end{equation}
and if $p\nmid\frac{k}{L}$, then
\begin{equation}\label{im2}
T_l^k(\GF{p^n})\subset\{a\in\GF{p^n}\,|\,S_{e}^{2e}\circ
T_{d}^n(a)=0\}.
\end{equation}

By the way, by Theorem~\ref{ker_T} we have:
$$\#T_l^k(\GF{p^n})=p^n/\#\{\ker(T_l^k)\cap
\GF{p^n}\}=\left\{\begin{array}{ll}
    p^{n-d}, & \mbox{if $p\vert \frac{k}{L}$}\\
    p^{n-(d-e)}, & \mbox{otherwise.}
    \end{array}\right.$$
On the other hand, by the well-known nature of the trace mapping one
knows
$$\#\{a\in\GF{p^n}\,|\,T_{d}^n(a)=0\}=p^{n-d}$$ and
$$\#\{a\in\GF{p^n}\,|\,S_{e}^{2e}\circ
T_{d}^n(a)=0\}=\#\{a\in\GF{p^n}\,|\, T_{d}^n(a)\in
\GF{p^e}\}=p^{n-(d-e)}.$$ Thus, we conclude that the inclusions
\eqref{im1} and \eqref{im2} are indeed equalities. That is, the if
and only if conditions for $T_l^k(X)=a$ to have a
$\GF{p^n}-$solution are justified.

Let us check the validity of the given particular solutions. If
$T_d^n(a)=0$, we have
\begin{align*}
T_l^k\left(S_l^{2l}(\sum_{i=0}^{\frac{n}{d}-2}\sum_{j=i+1}^{\frac{n}{d}-1}\delta^{p^{kj}}a^{p^{ki}})\right)&=T_l^k\circ
S_l^{2l}(\sum_{i=0}^{\frac{n}{d}-2}\sum_{j=i+1}^{\frac{n}{d}-1}\delta^{p^{kj}}a^{p^{ki}})\\
&=S_k^{2k}(\sum_{i=0}^{\frac{n}{d}-2}\sum_{j=i+1}^{\frac{n}{d}-1}\delta^{p^{kj}}a^{p^{ki}}) \text{ (by Item 3 of Lemma~\ref{lem_properties})}\\
&=\sum_{i=0}^{\frac{n}{d}-2}\sum_{j=i+1}^{\frac{n}{d}-1}\delta^{p^{kj}}a^{p^{ki}}-\sum_{i=1}^{\frac{n}{d}-1}\sum_{j=i+1}^{\frac{n}{d}}\delta^{p^{kj}}a^{p^{ki}}\\
&=a-\delta T_d^n(a)=a.
\end{align*}
Now, suppose that $S_{e}^{2e}\circ T_{d}^n(a)=0$ i.e. $T_{d}^n(a)\in
\GF{p^e}$. Then, $S_l^{2l}(T^n_d(a))=T^n_d(a)-T^n_d(a)^{p^l}=0$, and
for
$y_0=\sum_{i=0}^{\frac{n}{d}-2}\sum_{j=i+1}^{\frac{n}{d}-1}\delta^{p^{kj}}S_l^{2l}(a)^{p^{ki}}$,
it holds
\begin{align*}
S_k^{2k}(y_0)&=\sum_{i=0}^{\frac{n}{d}-2}\sum_{j=i+1}^{\frac{n}{d}-1}\delta^{p^{kj}}S_l^{2l}(a)^{p^{ki}}-\sum_{i=0}^{\frac{n}{d}-1}\sum_{j=i+1}^{\frac{n}{d}}\delta^{p^{kj}}S_l^{2l}(a)^{p^{ki}}\\
&=S_l^{2l}(a)-\delta T^n_d(S_l^{2l}(a))=S_l^{2l}(a)-\delta
S_l^{2l}(T^n_d(a))\\
&=S_l^{2l}(a).
\end{align*}

Since $S_k^{2k}(y_0)=S_l^{2l}(T_l^k(y_0)) $ (Item 3 of
Lemma~\ref{lem_properties}), we get $$\beta:=a-T_l^k(y_0)\in
\ker(S_l^{2l})\cap \GF{p^n}\subset \GF{p^l}\cap
\GF{p^n}=\GF{p^{e}}.$$ Now, $\beta \delta_1\in \GF{p^d}\subset
\GF{p^L}$ and therefore
\begin{align*}
T_l^k(\beta \delta_1) &=T_L^k\circ T_l^L(\beta \delta_1)\text{ (by
Item 1 of Lemma~\ref{lem_properties})}\\&= \frac{k}{L}T_l^L(\beta
\delta_1) \text{ (since $T_l^L(\beta \delta_1)\in
\GF{p^L}$)}\\&=\frac{k}{L}T_e^d(\beta \delta_1) \text{ (by Item 4 of
Lemma~\ref{lem_properties})}\\
&=\frac{k}{L}\beta T_e^d( \delta_1) \text{ (since $\beta\in
\GF{p^e}$)}\\
&=\frac{k}{L}\beta.
\end{align*}
That is, we get
$T_l^k(\frac{L}{k}(a-T_l^k(y_0))\delta_1)=a-T_l^k(y_0)$, or
equivalently, $$T_l^k(y_0+\frac{L}{k}(a-T_l^k(y_0))\delta_1)=a.$$
\end{proof}

\begin{theorem}\label{S_kneq2-1} Let  $p\neq 2$, $\frac{k}{l}$
even, $\delta\in \GF{p^n}^*$ and $\delta_1\in \GF{p^d}^*$ be any
elements such that $T_{d}^n(\delta)=1$ and  $T_{2e}^d(\delta_1)=1$.
\begin{enumerate}
\item When $\frac{d}{e}$ is odd, or, when $\frac{d}{e}$ is even and
$p|\frac{k}{L}$, there exists a solution in $\GF{p^n}$ to the
equation $S_l^k(X)=a$ if and only if $T_{d}^n(a)=0$.  In that case,
\begin{equation}
x_0=T_l^{2l}(\sum_{i=0}^{\frac{n}{d}-2}\sum_{j=i+1}^{\frac{n}{d}-1}\delta^{p^{kj}}a^{p^{ki}})
\end{equation}
is a particular $\GF{p^n}-$solution to the equation $S_l^k(X)=a$.
\item When $\frac{d}{e}$ is even and $p\nmid\frac{k}{L}$, there exists
a solution in $\GF{p^n}$ to the equation $S_l^k(X)=a$ if and only if
$T_{e}^{2e}\circ T_{d}^n(a)=0$. In that case,
\begin{equation}\nonumber
x_0=y_0+\frac{L}{2k}(a-S_l^k(y_0))\delta_1,
\end{equation}
where
\begin{equation}\nonumber
y_0=\sum_{i=0}^{\frac{n}{d}-2}\sum_{j=i+1}^{\frac{n}{d}-1}\delta^{p^{kj}}T_l^{2l}(a)^{p^{ki}},
\end{equation}
is a particular $\GF{p^n}-$solution to the equation $S_l^k(X)=a$.
\end{enumerate}
\end{theorem}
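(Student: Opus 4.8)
The plan is to mirror the proof of Theorem~\ref{T_kneq2}, using throughout that $T^k_l$, $S^k_l$, $T^n_d$ and all the auxiliary operators are $\GF{p}$-linear polynomials in the Frobenius map $x\mapsto x^p$ and hence commute with one another. I would separate the argument into (i) proving the stated solvability criterion on $a$, and (ii) checking that the displayed $x_0$ is a genuine solution.

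For (i), start from $a=S^k_l(x_0)$ with $x_0\in\GF{p^n}$ and evaluate $T^n_d(a)$. Commuting $T^n_d$ inward gives $T^n_d(a)=S^k_l(w)$ with $w:=T^n_d(x_0)\in\GF{p^d}$, so everything reduces to the behaviour of $S^k_l$ on $\GF{p^d}$. Grouping the $k/l$ summands of $S^k_l(w)$ into $k/L$ blocks of length $d/e=L/l$ (legitimate since $(d,l)=e$, so $w^{p^{li}}$ is $(d/e)$-periodic and $w^{p^{mL}}=w$) yields $S^k_l(w)=\big(\sum_{m=0}^{k/L-1}(-1)^{m(d/e)}\big)S^L_l(w)$. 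When $d/e$ is odd the scalar sum vanishes because $k/L$ is even, so $T^n_d(a)=0$ unconditionally; when $d/e$ is even it equals $k/L$ and $S^L_l(w)=S^d_e(w)$ by Item~4 of Lemma~\ref{lem_properties}, so $T^n_d(a)=\tfrac{k}{L}S^d_e(w)$, which is $0$ if $p\mid k/L$ and otherwise satisfies $T^{2e}_e\circ T^n_d(a)=\tfrac{k}{L}S^{2d}_d(w)=0$ (using $T^{2e}_e\circ S^d_e=S^{2d}_d$ and $w\in\GF{p^d}$). This gives the ``only if'' inclusions; the converse follows, exactly as in Theorem~\ref{T_kneq2}, by matching $\#S^k_l(\GF{p^n})=p^n/\#(\ker(S^k_l)\cap\GF{p^n})$ taken from Theorem~\ref{ker_S_even} against the trace counts $\#\{T^n_d(a)=0\}=p^{n-d}$ and $\#\{T^{2e}_e\circ T^n_d(a)=0\}=p^{n-(d-e)}$.

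For (ii) I would reuse the telescoping identity underlying Theorem~\ref{T_kneq2}: for any $c\in\GF{p^n}$ and $Z=\sum_{i=0}^{n/d-2}\sum_{j=i+1}^{n/d-1}\delta^{p^{kj}}c^{p^{ki}}$ one has $S^{2k}_k(Z)=Z-Z^{p^k}=c-\delta\,T^n_d(c)$, using $T^n_d(\delta)=1$ and $\phi^{[n,k]}=\mathrm{id}$ on $\GF{p^n}$. Since $k/l$ is even, $S^k_l\circ T^{2l}_l=S^{2k}_k$ (Item~2 of Lemma~\ref{lem_properties}). In Case~1 this gives directly $S^k_l(x_0)=S^{2k}_k(Z)=a-\delta\,T^n_d(a)=a$. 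In Case~2, with $c=T^{2l}_l(a)$, one gets $S^{2k}_k(y_0)=T^{2l}_l(a)-\delta\,T^n_d(T^{2l}_l(a))$; here $d/e$ even forces $l/e$ odd (a $2$-adic check gives $v_2(e)=v_2(l)$), and writing $u:=T^n_d(a)$ the hypothesis $T^{2e}_e(u)=0$ means $u^{p^e}=-u$, whence $u^{p^l}=-u$ and $T^n_d(T^{2l}_l(a))=u+u^{p^l}=0$. Thus $S^{2k}_k(y_0)=T^{2l}_l(a)$, and since $S^{2k}_k(y_0)=T^{2l}_l(S^k_l(y_0))$, the element $\beta:=a-S^k_l(y_0)$ lies in $\ker(T^{2l}_l)\cap\GF{p^n}$; hence $\beta^{p^l}=-\beta$, which forces $\beta\in\GF{p^{2e}}$ with $\beta^{p^e}=-\beta$.

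The delicate step, which I expect to be the main obstacle, is verifying $S^k_l\big(\tfrac{L}{2k}\beta\delta_1\big)=\beta$. Since $2e\mid d$ we have $\beta\delta_1\in\GF{p^d}$, and because $\beta^{p^{li}}=(-1)^i\beta$ the alternating signs in $S^k_l$ cancel, giving $S^k_l(\beta\delta_1)=\beta\,T^k_l(\delta_1)=\tfrac{k}{L}\beta\,T^d_e(\delta_1)$. The constant is then pinned down precisely by the chosen normalization of $\delta_1$: splitting $T^d_e$ into even and odd Frobenius indices gives $T^d_e(\delta_1)=T^d_{2e}(\delta_1)+\big(T^d_{2e}(\delta_1)\big)^{p^e}$, and since $T^d_{2e}(\delta_1)=1\in\GF{p^{2e}}$ is fixed by $\phi^e$, this equals $1+1=2$. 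Hence $S^k_l\big(\tfrac{L}{2k}\beta\delta_1\big)=\tfrac12\beta\,T^d_e(\delta_1)=\beta$ and $S^k_l(x_0)=S^k_l(y_0)+\beta=a$. This identity $T^d_e(\delta_1)=2\,T^d_{2e}(\delta_1)$ is exactly what explains the coefficient $\tfrac{L}{2k}$ together with the normalization $T^d_{2e}(\delta_1)=1$ appearing in the statement.
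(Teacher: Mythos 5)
Your proposal is correct and follows essentially the same route as the paper's proof: the same reduction of the solvability criterion to the action of $S_l^k$ on $\GF{p^{d}}$ followed by the cardinality comparison against Theorem~\ref{ker_S_even}, and the same construction of $y_0$ and $\beta:=a-S_l^k(y_0)\in\ker(T_l^{2l})\cap\GF{p^{2e}}$ with $\beta^{p^e}=-\beta$. The only (cosmetic) difference is in evaluating $S_l^k(\beta\delta_1)$: you factor out $\beta$ via $\beta^{p^{li}}=(-1)^i\beta$ and use $T_{e}^{d}(\delta_1)=2T_{2e}^{d}(\delta_1)=2$, whereas the paper reduces to $\frac{k}{L}S_{e}^{2e}(\beta T_{2e}^{d}(\delta_1))=\frac{k}{L}(\beta-\beta^{p^e})=\frac{2k}{L}\beta$ --- the two computations are equivalent.
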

\begin{proof}
Suppose that $S_l^k(x_0)=a$ for some $x_0\in \GF{p^n}$. When
$\frac{d}{e}=\frac{L}{l}$ is odd,  $\frac{k}{L}$ is even since
$\frac{k}{l}=\frac{L}{l}\cdot \frac{k}{L}$ was assumed to be even.
Then, we have
\begin{align*}
T_{d}^n(a)&=T_{d}^n\circ S_l^k(x_0)\\&=T_{d}^n\circ
S_{L}^{k}\circ S_l^{L}(x_0) \text{ (by Item 1 of Lemma~\ref{lem_properties})}\\
&=S_{L}^{k}( S_l^{L}\circ T_{d}^n(x_0))
\\&=0 \text{ (since
$S_l^{L}\circ T_{d}^n(x_0)\in \GF{p^d}\subset \GF{p^L}$ and
$\frac{k}{L}$ is even)}
\end{align*}
and thus
\begin{equation}\label{im3}
S_l^k(\GF{p^n})\subset \{a\in\GF{p^n}\,|\,T_{d}^n(a)=0\}.
\end{equation}

On the other hand, when $\frac{d}{e}=\frac{L}{l}$ is even, one has
\begin{align*}
T_{d}^n(a)&=T_{d}^n\circ S_l^k(x_0)\\
&=T_{d}^n\circ
T_{L}^{k}\circ S_l^{L}(x_0) \text{ (by Item 1 of Lemma~\ref{lem_properties})}\\
&=T_{L}^{k}\circ S_l^{L}\circ
T_{d}^n(x_0)\\
&=T_{L}^{k}\circ S_{e}^{d}\circ
T_{d}^n(x_0) \text{ (by Item 4 of Lemma~\ref{lem_properties})}\\
&=\frac{k}{L} S_{e}^{d}\circ T_{d}^n(x_0)
\text{ (since $S_{e}^{d}\circ T_{d}^n(x_0)\in \GF{p^d}\subset \GF{p^L}$)}.\\
\end{align*}
Therefore, if $p|\frac{k}{L}$, then \eqref{im3} still holds true,
and if $p\nmid\frac{k}{L}$, then it holds
\begin{align*}
T_{e}^{2e}\circ T_{d}^n(a)&=T_{e}^{2e}\circ S_{e}^{d}\circ
T_{d}^n(x_0)\\
&=S_d^{2d}\circ T_{d}^n(x_0)\text{ (by Item 2 of
Lemma~\ref{lem_properties})}\\&=0 \text{ (since $T_{d}^n(x_0)\in
\GF{p^d}$)}
\end{align*}
and thus
\begin{equation}\label{im4}
S_l^k(\GF{p^n})\subset \{a\in\GF{p^n}\,|\,T_{e}^{2e}\circ
T_{d}^n(a)=0\}.
\end{equation}
By the way, by Theorem~\ref{ker_S_even} we have:
$$\#S_l^k(\GF{p^n})=p^n/\#\{\ker(S_l^k)\cap
\GF{p^n}\}=\left\{\begin{array}{ll}
    p^{n-d}, & \mbox{if $\frac{d}{e}$ is odd, or, $\frac{d}{e}$ is even and $p\vert \frac{k}{L}$}\\
    p^{n-(d-e)}, & \mbox{otherwise.}
    \end{array}\right.$$
On the other hand, by the well-known nature of the trace mapping one
knows
$$\#\{a\in\GF{p^n}\,|\,T_{d}^n(a)=0\}=p^{n-d}$$ and
$$\#\{a\in\GF{p^n}\,|\,T_{e}^{2e}\circ
T_{d}^n(a)=0\}=p^{n-(d-e)}.$$ Therefore the inclusions \eqref{im3}
and \eqref{im4} are indeed equalities. That is, the if and only if
conditions for $S_l^k(X)=a$ to have a $\GF{p^n}-$solution are
justified.

Since  $S_l^k\circ T_l^{2l}=S_k^{2k}$ (Item 2 of
Lemma~\ref{lem_properties}), it can be checked  by the same
computation as in the proof of Item 1 of Theorem~\ref{T_kneq2} that
under the condition $T_d^n(a)=0$,
$$x_0=T_l^{2l}(\sum_{i=0}^{\frac{n}{d}-2}\sum_{j=i+1}^{\frac{n}{d}-1}\delta^{p^{kj}}a^{p^{ki}})$$
is a particular $\GF{p^n}-$solution to the equation $S_l^k(X)=a$.

Now, assuming that $\frac{d}{e}=\frac{L}{l}$ is even, let us suppose
that $T_{e}^{2e}\circ T_{d}^n(a)=0$ i.e.
$T_{d}^n(a)^{p^e}=-T_{d}^n(a)$. Then, $\frac{l}{e}$ is odd since it
is prime to $\frac{n}{e}=\frac{d}{e}\cdot\frac{n}{d}$ which is even.
Hence, $T_l^{2l}(T^n_d(a))=T^n_d(a)+T^n_d(a)^{p^l}=0$, and for
$y_0=\sum_{i=0}^{\frac{n}{d}-2}\sum_{j=i+1}^{\frac{n}{d}-1}\delta^{p^{kj}}T_l^{2l}(a)^{p^{ki}}$,
it holds
\begin{align*}
S_k^{2k}(y_0)&=\sum_{i=0}^{\frac{n}{d}-2}\sum_{j=i+1}^{\frac{n}{d}-1}\delta^{p^{kj}}T_l^{2l}(a)^{p^{ki}}-\sum_{i=0}^{\frac{n}{d}-1}\sum_{j=i+1}^{\frac{n}{d}}\delta^{p^{kj}}T_l^{2l}(a)^{p^{ki}}\\
&=T_l^{2l}(a)-\delta T^n_d(T_l^{2l}(a))=T_l^{2l}(a)-\delta
T_l^{2l}(T^n_d(a))\\
&=T_l^{2l}(a).
\end{align*}

Since $S_k^{2k}(y_0)=T_l^{2l}(S_l^k(y_0)) $ (Item 2 of
Lemma~\ref{lem_properties}), letting $\beta:=a-S_l^k(y_0)$, we have
$$\beta\in \ker(T_l^{2l})\cap \GF{p^n}\subset \GF{p^{2l}}\cap
\GF{p^n}\subset \GF{p^{2e}}\subset \GF{p^{d}}$$ and
\begin{align*}
S_l^k(\beta \delta_1) &= \frac{k}{L}S_l^L(\beta \delta_1)\text{
(since $\frac{L}{l}$ is even)}\\&=\frac{k}{L}S_e^d(\beta \delta_1)
\text{ (by Item 4 of
Lemma~\ref{lem_properties})}\\
&=\frac{k}{L}S_e^{2e}( T^d_{2e}( \beta\delta_1)) \text{ (by Item 1
of Lemma~\ref{lem_properties})}\\
&=\frac{k}{L}S_e^{2e}(\beta T^d_{2e}( \delta_1)) \text{ (since
$\beta \in \GF{p^{2e}}$)}\\
&=\frac{k}{L}S_e^{2e}(\beta)=\frac{k}{L}(\beta-\beta^{p^e}).
\end{align*}
On the other hand, since $\ker(T_l^{2l})\cap
\GF{p^{2e}}=S_e^{2e}(\GF{p^{2e}})$ (see Theorem~\ref{ker_T}),
$\beta\in \ker(T_l^{2l})\cap \GF{p^n}$ means that
$\beta=\alpha-\alpha^{p^e}$ for some $\alpha\in \GF{p^2e}$, and
therefore we get $
\beta+\beta^{p^e}=(\alpha-\alpha^{p^e})+(\alpha-\alpha^{p^e})^{p^e}=0$
and hence $$S_l^k(\beta \delta_1)=\frac{2k}{L}\beta,$$
 or
equivalently,
$$S_l^k\left(y_0+\frac{L}{2k}(a-S_l^k(y_0))\delta_1\right)=a.$$
\end{proof}

\begin{theorem}\label{S_kneq2-2} Let  $p\neq 2$,  $\frac{k}{l}$
odd, $\delta\in \GF{p^n}^*$ and $\delta_1\in \GF{p^d}^*$ be any
elements such that $T_{d}^n(\delta)=1$ and
$T_{2e}^{2d}(\delta_1)=1$.
\begin{enumerate}
\item When $\frac{n}{d}$ is even and $p|\frac{k}{L}$, there
exists a solution in $\GF{p^n}$ to the equation $S_l^k(X)=a$ if and
only if $S_{d}^n(a)=0$. In that case,
\begin{equation}\nonumber
x_0=T_l^{2l}(\sum_{i=0}^{\frac{n}{d}-2}\sum_{j=i+1}^{\frac{n}{d}-1}\delta^{p^{kj}}a^{p^{ki}}(-1)^i)
\end{equation}
is a particular $\GF{p^n}-$solution to the equation $S_l^k(X)=a$.

\item When $\frac{n}{d}$ is even and $p\nmid\frac{k}{L}$, there exists
a solution in $\GF{p^n}$ to the equation $S_l^k(X)=a$ if and only if
$T_{e}^{2e}\circ S_{d}^n(a)=0$. In that case,
\begin{equation}\nonumber
x_0=y_0+\frac{L}{2k}S_{d}^{2d}((a-S_l^k(y_0))\delta_1),
\end{equation}
where
\begin{equation}\nonumber
y_0=\sum_{i=0}^{\frac{n}{d}-2}\sum_{j=i+1}^{\frac{n}{d}-1}\delta^{p^{kj}}T_l^{2l}(a)^{p^{ki}}(-1)^i,
\end{equation}
is a particular $\GF{p^n}-$solution to the equation $S_l^k(X)=a$.

\item When $\frac{n}{d}$ is odd, the equation $S_l^k(X)=a$ has a unique $\GF{p^n}-$solution: $$x_0=\frac{T_l^{2l}\circ
S_k^{[n,k]}(a)}{2}.$$
\end{enumerate}
\end{theorem}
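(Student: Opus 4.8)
The plan is to treat the two regimes $\frac{n}{d}$ even (Items 1 and 2) and $\frac{n}{d}$ odd (Item 3) separately, always exploiting that $\frac{k}{l}$ odd forces both $\frac{L}{l}=\frac{d}{e}$ and $\frac{k}{L}$ to be odd (divisors of the odd number $\frac{k}{l}$), and that all the operators $T$ and $S$ are polynomials in the Frobenius with coefficients in $\GF{p}$, hence commute. The backbone for the solvability criteria of Items 1 and 2 is the single identity, valid for $x_0\in\GF{p^n}$ when $\frac{n}{d}$ is even,
\[
S_d^n(S_l^k(x_0))=\tfrac{k}{L}\,S_e^n(x_0).
\]
I would prove it in three moves: first $S_d^n=S_k^{[n,k]}$ on $\GF{p^n}$ by Item 4 of Lemma~\ref{lem_properties} (legitimate since $\frac{[n,k]}{k}=\frac{n}{d}$ is even); then $S_k^{[n,k]}\circ S_l^k=S_l^{[n,k]}$ by Item 1 of Lemma~\ref{lem_properties} ($\frac{k}{l}$ odd); and finally, grouping the sum $S_l^{[n,k]}(x_0)=\sum_i(-1)^i x_0^{p^{li}}$ into $\frac{k}{L}$ blocks of length $\frac{n}{e}$ (the period of $x\mapsto x^{p^l}$ on $\GF{p^n}$), the sign pattern is itself periodic because $\frac{n}{e}=\frac{n}{d}\cdot\frac{d}{e}$ is even, so the sum collapses to $\frac{k}{L}S_l^{[n,l]}(x_0)=\frac{k}{L}S_e^n(x_0)$ (Item 4 again, as $\frac{n}{e}$ is even).

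From this identity the necessity halves of Items 1 and 2 follow at once: if $p\mid\frac{k}{L}$ then $S_d^n(a)=0$, while if $p\nmid\frac{k}{L}$ then applying $T_e^{2e}$ and using $S_e^n(x_0)^{p^e}=-S_e^n(x_0)$ (Item 5 of Lemma~\ref{lem_properties}, valid since $\frac{n}{e}$ is even) gives $T_e^{2e}(S_d^n(a))=0$. Sufficiency is then a cardinality count exactly as in Theorems~\ref{T_kneq2} and \ref{S_kneq2-1}: Theorem~\ref{ker_S_odd} gives $\#S_l^k(\GF{p^n})=p^{n-d}$ when $p\mid\frac{k}{L}$ and $p^{n-(d-e)}$ otherwise, while the well-known nature of these maps gives $\#\{a\in\GF{p^n}\mid S_d^n(a)=0\}=p^{n-d}$ and $\#\{a\in\GF{p^n}\mid T_e^{2e}(S_d^n(a))=0\}=p^{n-(d-e)}$ (here using $\frac{d}{e}$ odd, so that $y^{p^e}=-y$ already forces $y^{p^d}=-y$); matching the counts turns the inclusions into equalities. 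For the explicit solutions I would lean on Item 2 of Lemma~\ref{lem_properties}, which for $\frac{k}{l}$ odd reads $S_l^k\circ T_l^{2l}=T_l^{2l}\circ S_l^k=T_k^{2k}$. In Item 1, writing $x_0=T_l^{2l}(w)$ gives $S_l^k(x_0)=T_k^{2k}(w)=w+w^{p^k}$, and the double sum telescopes (shift the index by one in $w^{p^k}$; all interior terms cancel) to $a-\delta\,S_d^n(a)=a$, using $T_d^n(\delta)=1$ and $\sum_i(-1)^i a^{p^{ki}}=S_k^{[n,k]}(a)=S_d^n(a)=0$. For Item 3, the same telescoping yields $T_k^{2k}(S_k^{[n,k]}(a))=a-(-1)^{n/d}a^{p^{[n,k]}}=2a$ since $\frac{n}{d}$ is odd and $p^{[n,k]}$ fixes $\GF{p^n}$, so $x_0=\tfrac12 T_l^{2l}(S_k^{[n,k]}(a))$ solves the equation; uniqueness is immediate because Theorem~\ref{ker_S_odd} makes $S_l^k$ injective, hence bijective, on $\GF{p^n}$ when $\frac{n}{d}$ is odd.

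Item 2 is where the real work sits, and I expect it to be the main obstacle. The telescoping shows $T_k^{2k}(y_0)=T_l^{2l}(a)-\delta\,S_d^n(T_l^{2l}(a))=T_l^{2l}(a)$, the last step using the solvability condition $T_e^{2e}(S_d^n(a))=0$ together with $\frac{l}{e}$ odd (prime to the even $\frac{n}{e}$) to get $T_l^{2l}(S_d^n(a))=S_d^n(a)+S_d^n(a)^{p^l}=0$. Since $T_k^{2k}=T_l^{2l}\circ S_l^k$, this yields $\beta:=a-S_l^k(y_0)\in\ker(T_l^{2l})\cap\GF{p^n}$; identifying this kernel as $\GF{p^{2e}}\cap\{\beta^{p^e}=-\beta\}$ (via $(n,2l)=2e$, using $\frac{n}{e}$ even and $\frac{l}{e}$ odd, and Theorem~\ref{ker_T}) is the delicate point. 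Two short computations then close the argument. First, because $\beta^{p^d}=-\beta$ ($\frac{d}{e}$ odd) and $\delta_1\in\GF{p^d}$, one gets $S_d^{2d}(\beta\delta_1)=\beta\delta_1-\beta^{p^d}\delta_1=2\beta\delta_1$. Second, splitting $S_l^k=S_L^k\circ S_l^L$ (Item 1, $\frac{L}{l}$ odd), the signs cancel: since $\frac{l}{e}$ and $\frac{L}{e}$ are odd one has $\beta^{p^{li}}=\beta^{p^{Li}}=(-1)^i\beta$, whence $S_l^L(\beta\delta_1)=\beta\,T_l^L(\delta_1)=\beta\,T_e^d(\delta_1)$ (Item 4) and then $S_L^k(\beta\,T_e^d(\delta_1))=\frac{k}{L}\beta\,T_e^d(\delta_1)$. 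Finally $T_e^d(\delta_1)=T_{2e}^{2d}(\delta_1)=1$, because $\frac{d}{e}$ odd makes $x\mapsto x^{p^{2ei}}$ and $x\mapsto x^{p^{ei}}$ run through the same maps on $\GF{p^d}$; substituting into $x_0=y_0+\frac{L}{2k}S_d^{2d}(\beta\delta_1)=y_0+\frac{L}{k}\beta\delta_1$ gives $S_l^k(x_0)=S_l^k(y_0)+\frac{L}{k}\cdot\frac{k}{L}\beta=S_l^k(y_0)+\beta=a$, as required.
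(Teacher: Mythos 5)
Your proof is correct and follows essentially the same strategy as the paper's: the same necessity identity $S_d^n\circ S_l^k=\tfrac{k}{L}S_e^n$ on $\GF{p^n}$ (which you derive by a direct periodicity/block-counting collapse of $S_l^{[n,k]}$ instead of the paper's chain of compositions through its auxiliary identities \eqref{eq1}--\eqref{eq3}), the same cardinality comparison for sufficiency, the same telescoping verification of the explicit solutions, and the same analysis of $\beta=a-S_l^k(y_0)$ via $\ker(T_l^{2l})\cap\GF{p^n}=S_e^{2e}(\GF{p^{2e}})$. The only genuine variation, in Item 2, is that you first evaluate $S_d^{2d}(\beta\delta_1)=2\beta\delta_1$ and then compute $S_l^k(\beta\delta_1)=\tfrac{k}{L}\beta T_e^d(\delta_1)$ (justifying $T_e^d(\delta_1)=T_{2e}^{2d}(\delta_1)=1$ via $\tfrac{d}{e}$ odd), whereas the paper evaluates $S_l^k\bigl(S_d^{2d}(\beta\delta_1)\bigr)=\tfrac{2k}{L}\beta$ in one pass; both computations are valid and yield the same conclusion.
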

\begin{proof}
Suppose $a\in S_l^k(\GF{p^n})$ i.e. $a=S_l^k(x_0)$ for some
$x_0\in\GF{p^n}$.

Let us assume that $\frac{n}{d}$ is even. In this case,
 $\frac{k}{d}$ and its divisor $\frac{L}{d}$ are
 odd since $(\frac{n}{d}, \frac{k}{d})=1$. One has
\begin{align*}
S_{d}^n(a)&=S_{d}^n\circ S_l^k(x_0)\\&=S_{d}^n\circ
S_{L}^k\circ S_l^{L}(x_0) \text{ (by Item 1 of Lemma~\ref{lem_properties})}\\
&=S_{d}^{2d}\circ T_{2d}^n\circ
S_{L}^k\circ S_l^{L}(x_0) \text{ (again by Item 1 of Lemma~\ref{lem_properties})}\\
&=S_{L}^{2L}\circ T_{2d}^n\circ S_{L}^k\circ S_l^{L}(x_0) \text{
(by \eqref{eq2})}\\
&=S_{L}^k\circ S_{L}^{2L}\circ
S_l^{L}\circ T_{2d}^n(x_0)\\
&=S_{L}^k\circ S_l^{2L}\circ T_{2d}^n(x_0) \text{ (by Item 1 of Lemma~\ref{lem_properties})}\\
&=\frac{k}{L}S_l^{2L}\circ T_{2d}^n(x_0) \text{ (by \eqref{eq1})}\\
&=\frac{k}{L}S_e^{2d}\circ T_{2d}^n(x_0) \text{ (by \eqref{eq3})}\\
&=\frac{k}{L}S_{e}^{n}(x_0)=\frac{k}{L}S_{e}^{2e}\circ
T_{2e}^{n}(x_0) \text{ (once again by Item 1 of
Lemma~\ref{lem_properties})}.
\end{align*}
Therefore, if $p|\frac{k}{L}$, then it holds
\begin{equation}\label{im5}
S_l^k(\GF{p^n})\subset\{a\in\GF{p^n}\,|\,S_{d}^n(a)=0\},
\end{equation}
and if $p\nmid\frac{k}{L}$, then it holds
\begin{equation}\label{im6}
S_l^k(\GF{p^n})\subset\{a\in\GF{p^n}\,|\,T_{e}^{2e}\circ
S_{d}^n(a)=0\}
\end{equation}
since $T_{e}^{2e}\circ S_{d}^n(S_l^k(x_0))=T_{e}^{2e}\circ
S_{e}^{2e}\circ T_{2e}^{n}(x_0)=0$ by Lemma~\ref{T_l^k=0}.
Comparisons of set cardinalities regarding Item 2 of
Theorem~\ref{ker_S_odd} make us conclude that the inclusions
\eqref{im5} and \eqref{im6} are indeed equalities. That is, the if
and only if conditions for $S_l^k(X)=a$ to have a
$\GF{p^n}-$solution are justified.

If $S_d^n(a)=0$, then for
$x_0=T_l^{2l}(\sum_{i=0}^{\frac{n}{d}-2}\sum_{j=i+1}^{\frac{n}{d}-1}\delta^{p^{kj}}a^{p^{ki}}(-1)^i)$,
\begin{align*}
S_l^k\left(x_0\right)&=T_k^{2k}(\sum_{i=0}^{\frac{n}{d}-2}\sum_{j=i+1}^{\frac{n}{d}-1}\delta^{p^{kj}}a^{p^{ki}}(-1)^i)
\text{ (by Item 2 of Lemma~\ref{lem_properties})}\\
&=\sum_{i=0}^{\frac{n}{d}-2}\sum_{j=i+1}^{\frac{n}{d}-1}\delta^{p^{kj}}a^{p^{ki}}(-1)^i+\sum_{i=1}^{\frac{n}{d}-1}\sum_{j=i+1}^{\frac{n}{d}}\delta^{p^{kj}}a^{p^{ki}}(-1)^{i-1}\\
&=a-\delta S_d^n(a)=a.
\end{align*}

Now, suppose that $\frac{n}{d}=\frac{[n,k]}{k}$ is even and
$T_{e}^{2e}\circ S_{d}^n(a)=0$ (i.e.
$S_{d}^n(a)^{p^e}=-S_{d}^n(a)$). Then, $\frac{l}{e}$ is odd since it
is prime to $\frac{n}{e}=\frac{d}{e}\cdot\frac{n}{d}$ which is even,
and hence $T_l^{2l}(S_d^n(a))=S_d^n(a)+S_d^n(a)^{p^l}=0$. Thus, for
$y_0=\sum_{i=0}^{\frac{n}{d}-2}\sum_{j=i+1}^{\frac{n}{d}-1}\delta^{p^{kj}}T_l^{2l}(a)^{p^{ki}}(-1)^i$
one has
\begin{align*}
T_k^{2k}(y_0)&=\sum_{i=0}^{\frac{n}{d}-2}\sum_{j=i+1}^{\frac{n}{d}-1}\delta^{p^{kj}}T_l^{2l}(a)^{p^{ki}}(-1)^i+\sum_{i=1}^{\frac{n}{d}-1}\sum_{j=i+1}^{\frac{n}{d}}\delta^{p^{kj}}T_l^{2l}(a)^{p^{ki}}(-1)^{i-1}\\
&=T_l^{2l}(a)-\delta S_d^n(T_l^{2l}(a))=T_l^{2l}(a).
\end{align*}
Since $T_k^{2k}(y_0)=T_l^{2l}\circ S_l^k(y_0)$ (by Item 2 of
Lemma~\ref{lem_properties}), we have
$$\beta:=a-S_l^{k}(y_0)\in \ker(T_l^{2l})\cap \GF{p^n}\subset \GF{p^{2e}}\subset \GF{p^{2d}}.$$
Now,
\begin{align*}
S_l^k\left(S_{d}^{2d}(\beta\delta_1)\right)&=S_l^k\left(S_{L}^{2L}(\beta\delta_1)\right)
\text{ (by \eqref{eq2})}\\
&=S_L^k\circ S_l^L\left(S_{L}^{2L}(\beta\delta_1)\right) \text{ (by Item 1 of Lemma~\ref{lem_properties})}\\
&=S_L^k\circ \left(S_l^L\circ S_{l}^{2L}(\beta\delta_1)\right)=S_L^k\circ \left(S_{l}^{2L}(\beta\delta_1)\right) \text{ (by Item 1 of Lemma~\ref{lem_properties})}\\
&=\frac{k}{L} S_{l}^{2L}(\beta\delta_1) \text{ (by \eqref{eq1})}\\
&=\frac{k}{L} S_{e}^{2d}(\beta\delta_1) \text{ (by \eqref{eq3})}\\
&=\frac{k}{L} S_{e}^{2e} \circ T_{2e}^{2d}(\beta\delta_1) \text{ (by Item 1 of Lemma~\ref{lem_properties})}\\
&=\frac{k}{L} S_{e}^{2e} (\beta T_{2e}^{2d}(\delta_1))=\frac{k}{L}
S_{e}^{2e} (\beta)=\frac{k}{L} (\beta-\beta^{p^e}).
\end{align*}
By the way, since $\beta\in \ker(T_l^{2l})\cap
\GF{p^{2e}}=S_e^{2e}(\GF{p^{2e}})$ (Theorem~\ref{ker_T}), it holds
$\beta+\beta^{p^e}=0$ and thus we get
$$S_l^k\left(S_{d}^{2d}(\beta\delta_1)\right)=\frac{2k}{L} \beta.$$
 That is,
$$S_l^k(y_0+\frac{L}{2k}S_{d}^{2d}((a-S_l^k(y_0))\delta_1))=a.$$

If $\frac{n}{d}=\frac{[n,k]}{k}$ is odd, then by
Theorem~\ref{ker_S_odd}, $S_l^k$ is a permutation on $\GF{p^n}$ and
the equation $S_l^k(X)=a$ has a unique $\GF{p^n}-$solution. In fact,
by successive applications of Item 2 of Lemma~\ref{lem_properties}
we have
\begin{align*}
S_l^k(T_l^{2l}\circ S_k^{[n,k]}(a/2))=T_k^{2k}\circ S_k^{[n,k]}(a/2)
= T_{[n,k]}^{2[n,k]}(a/2)=a.\end{align*}
\end{proof}

\section{Conclusion}
We explicitly determined the sets of preimages of linearized
polynomials
\begin{align*}
&T_l^k(X) := \sum_{i=0}^{\frac kl-1} X^{p^{li}},\\
&S_l^k(X) := \sum_{i=0}^{\frac kl-1} (-1)^i X^{p^{li}},
\end{align*}
over ${\overline{\GF{p}}}$ and over the finite field $\GF{p^n}$ for
any characteristic $p$ and any integer $n\geq 1$. In particular,
another solution for the case $p=2$ that was solved very recently in
\cite{MKCL2019} was obtained by Theorem~\ref{ker_T} and
Theorem~\ref{T_kneq2}.


\begin{thebibliography}{1}
\bibitem{BSS1999}
I. Blake, G. Seroussi and N. Smart. Elliptic Curves in Cryptography.
Number 265 in London Mathematical Society Lecture Note Series.
Cambridge University Press, 1999.

\bibitem{CarletBook}
 C.~Carlet. Boolean Functions for Cryptography and Error Correcting Codes.  Chapter of the monography {\em Boolean Models and Methods in Mathematics, Computer Science, and Engineering}, Y.~Crama and P.~Hammer eds,  Cambridge University Press, pp. 257-397, 2010.

\bibitem{CarletBook1}
C.~Carlet. Vectorial Boolean Functions for Cryptography. Chapter of
the monography {\em Boolean Models and Methods in Mathematics,
Computer Science, and Engineering}, Y.~Crama and P.~Hammer eds,
Cambridge University Press, pp.  398-469, 2010.

\bibitem{CSAJBOK2020}
B. Csajb\'{o}k.
Scalar $q$-subresultants and Dickson matrices. \newblock {\em Journal of Algebra}, Vol. 547,  pp. 116-128, 2020.


\bibitem{CSAJBOK2019109}
B. Csajb\'{o}k, G. Marino, O. Polverino and F. Zullo.
\newblock A characterization of linearized polynomials with maximum kernel.
\newblock {\em Finite Fields and Their Applications}, 56: pp. 109 -130, 2019.

\bibitem{MCGUIRE201968}
G. McGuire and J. Sheekey.
\newblock A characterization of the number of roots of linearized and
  projective polynomials in the field of coefficients.
\newblock {\em Finite Fields and Their Applications}, 57: pp. 68 -91, 2019.

\bibitem{MKCL2019}
S. Mesnager, K.H. Kim,  J.H. Choe, D.N. Lee and  D.S. Go. Solving
$x+x^{2^l}+\cdots+x^{2^{ml}}=a$ over $\GF{2^n}$. \newblock {\em Cryptography and
Communications}.  To appear. https://doi.org/10.1007/s12095-020-00425-3

\bibitem{Olga-Zullo2019}
O. Polverino, and F. Zullo. 
\newblock On the number of roots of some linearized polynomials.
 arXiv preprint arXiv:1909.00802 (2019).

\bibitem{WU201379}
B. Wu and Z. Liu.
\newblock Linearized polynomials over finite fields revisited.
\newblock {\em Finite Fields and Their Applications}, 22: pp. 79 -100, 2013.

\bibitem{Zanella2019}
C. Zanella
\newblock A condition for scattered linearized polynomials involving Dickson matrices. \newblock {\em Journal of Geometry} 110 (3):50, DOI: 10.1007/s00022-019-0505-z, 2019.



\end{thebibliography}
\end{document}